\newlength{\savedparindent}
\newdimen\ebproof@hsep
\newdimen\ebproof@rulemargin
\newdimen\ebproof@rulesep
\newdimen\ebproof@thickness
\newdimen\ebproof@labelsep
\newbox\ebproof@leftbox
\newbox\ebproof@rightbox
\let\expandafter\ebproof@rule@do\csname ebproof@rule@#1\endcsname}
  \def\ebproof@setleftlabel{%
    \setbox\ebproof@leftbox=\hbox{\ebproof@leftlabeltemplate{#1}}}}
  \def\ebproof@setrightlabel{%
    \setbox\ebproof@rightbox=\hbox{\ebproof@rightlabeltemplate{#1}}}}
\def\ebproof@template#1{$#1$}
\def\ebproof@lefttemplate#1{$#1\mathrel{}$}
\def\ebproof@righttemplate#1{$\mathrel{}#1$}
\def\ebproof@leftlabeltemplate#1{#1}
\def\ebproof@rightlabeltemplate#1{#1}
\def\ebproof@setleftlabel{\setbox\ebproof@leftbox=\box\voidb@x}
\def\ebproof@setrightlabel{\setbox\ebproof@rightbox=\box\voidb@x}
\def\ebproof@localdimen#1{%
  \advance\ebproof@localdimens1\relax
  \expandafter\dimendef\csname#1\endcsname\ebproof@localdimens
  \csname#1\endcsname=0pt\relax}
\def\ebproof@localbox#1{%
  \advance\ebproof@localboxes1\relax
  \expandafter\chardef\csname#1\endcsname\ebproof@localboxes
  \setbox\csname#1\endcsname}
\def\ebproof@alloc#1{%
  \ebproof@localdimen{#1@left}%
  \ebproof@localdimen{#1@right}%
  \ebproof@localdimen{#1@axis}%
  \ebproof@localbox{#1@box}}
\newbox\ebproof@box@stack
\newbox\ebproof@box@temp
\def\ebproof@clear{%
  \global\ebproof@level=0%
  \global\setbox\ebproof@box@stack=\box\voidb@x%
  \gdef\ebproof@stack{}}
\def\ebproof@push#1{%
  \global\advance\ebproof@level1\relax
  \global\setbox\ebproof@box@stack=\hbox{%
    \unhbox\ebproof@box@stack\copy\csname#1@box\endcsname}%
  \xdef\ebproof@stack{%
    {\the\csname#1@left\endcsname}%
    {\the\csname#1@right\endcsname}%
    {\the\csname#1@axis\endcsname}%
    {\ebproof@stack}}}
\def\ebproof@pop#1{%
  \ifnum\ebproof@level>0\relax
    \global\advance\ebproof@level-1\relax
    \global\setbox\ebproof@box@stack=\hbox{%
      \unhbox\ebproof@box@stack
      \global\setbox\ebproof@box@temp=\lastbox}%
    \ebproof@alloc{#1}=\box\ebproof@box@temp%
    \begingroup\def\pop##1##2##3##4{\endgroup%
      \csname#1@left\endcsname=##1\relax
      \csname#1@right\endcsname=##2\relax
      \csname#1@axis\endcsname=##3\relax
      \gdef\ebproof@stack{##4}}%
    \expandafter\pop\ebproof@stack
  \else
    \PackageError{ebproof}{%
      Missing premiss in a proof tree}{}%
    \ebproof@alloc{#1}=\box\voidb@x%
  \fi}
\def\ebproof@pushsimple#1{%
  \begingroup
  \ebproof@alloc{A}=\hbox{#1}%
  \A@axis=.5\wd\A@box
  \ebproof@push{A}%
  \endgroup}
\def\ebproof@pushsplit#1#2{%
  \begingroup
  \ebproof@alloc{A}=\hbox{#1}%
  \A@axis=\wd\A@box
  \setbox\A@box=\hbox{\unhbox\A@box#2}%
  \ebproof@push{A}%
  \endgroup}
\def\ebproof@joinh{%
  \begingroup
  \ebproof@pop{A}%
  \ebproof@pop{B}%
  \ebproof@alloc{C}=\hbox{\box\B@box \kern\ebproof@hsep \box\A@box}%
  \C@left=\B@left
  \C@right=\A@right
  \C@axis=\wd\C@box
  \advance\C@axis\B@left
  \advance\C@axis-\A@right
  \divide\C@axis2\relax
  \ebproof@push{C}%
  \endgroup}
\def\ebproof@joinh@multi#1{%
  \begingroup
  \countdef\c=1
  \c=#1\relax%
  \ifnum\c=0
    \ebproof@alloc{X}=\hbox{}%
    \ebproof@push{X}%
  \else
    \ebproof@joinh@loop
  \fi
  \endgroup}
\def\ebproof@joinh@loop{%
  \ifnum\c>1
    \ebproof@joinh
    \advance\c-1
    \expandafter\ebproof@joinh@loop
  \fi}
\def\ebproof@joinright{%
  \begingroup
  \ebproof@pop{A}%
  \ebproof@pop{B}%
  \ebproof@alloc{C}=\hbox{\box\B@box \kern\ebproof@hsep \copy\A@box}%
  \C@left=\B@left
  \C@right=\B@right
  \advance\C@right\wd\A@box
  \advance\C@right\ebproof@hsep
  \ebproof@push{C}%
  \endgroup}
\def\ebproof@joinv{%
  \begingroup
  \ebproof@setleftlabel
  \ebproof@setrightlabel
  \ebproof@pop{A}%
  \ebproof@pop{B}%
  \ebproof@alloc{C}=\box\voidb@x%
  \ebproof@localdimen{A@shift}%
  \ebproof@localdimen{B@shift}%
  \ebproof@localdimen{R@shift}%
  \ebproof@localdimen{R@raise}%
  \ebproof@localdimen{R@width}%
  \ebproof@localdimen{C@width}%
  \ebproof@localdimen{tmp}%
  %
  \ifdim\A@axis>\B@axis
    \A@shift=0pt%
    \B@shift=\A@axis
    \advance\B@shift-\B@axis
    \C@axis=\A@axis
  \else
    \A@shift=\B@axis
    \advance\A@shift-\A@axis
    \B@shift=0pt%
    \C@axis=\B@axis
  \fi
  \C@left=\A@left
  \advance\C@left\A@shift
  \C@right=\A@right
  \tmp=\wd\B@box
  \advance\tmp\B@shift
  \advance\tmp-\wd\A@box
  \advance\tmp-\A@shift
  \ifdim\tmp>0pt%
    \C@width=\wd\B@box
    \advance\C@width\B@shift
    \advance\C@right\tmp
  \else
    \C@width=\wd\A@box
    \advance\C@width\A@shift
  \fi
  \R@shift=\A@left
  \advance\R@shift\A@shift
  \tmp=\B@left
  \advance\tmp\B@shift
  \ifdim\R@shift>\tmp
    \R@shift=\tmp
  \fi
  \R@width=\wd\A@box
  \advance\R@width\A@shift
  \advance\R@width-\A@right
  \tmp=\wd\B@box
  \advance\tmp\B@shift
  \advance\tmp-\B@right
  \ifdim\tmp>\R@width
    \R@width=\tmp
  \fi
  \advance\R@width-\R@shift
  \ebproof@localbox{R@box}=\vbox{%
    \hsize=\R@width
    \hrule width \R@width height 0pt\relax
    \ebproof@rule@do}%
  \ifvoid\ebproof@leftbox\else
    \tmp=\wd\ebproof@leftbox
    \advance\tmp\ebproof@labelsep
    \ifdim\tmp>\R@shift
      \advance\tmp-\R@shift
      \advance\A@shift\tmp
      \advance\B@shift\tmp
      \advance\C@left\tmp
      \advance\C@axis\tmp
      \advance\C@width\tmp
      \R@shift=0pt\relax
    \else
      \advance\R@shift-\tmp
    \fi
  \fi
  \ebproof@localbox{RC@box}=\hbox{$\vcenter{\copy\R@box}$}%
  \R@raise=\ht\R@box
  \advance\R@raise-\ht\RC@box
  \setbox\RC@box=\hbox{%
    \ifvoid\ebproof@leftbox\else
      \copy\ebproof@leftbox
      \kern\ebproof@labelsep
    \fi
    \box\RC@box
    \ifvoid\ebproof@rightbox\else
      \kern\ebproof@labelsep
      \copy\ebproof@rightbox
    \fi}
  \tmp=\wd\RC@box
  \advance\tmp\R@shift
  \ifdim\tmp>\C@width
    \advance\tmp-\C@width
    \advance\C@right\tmp
  \fi
  \setbox\RC@box=\hbox{\raise\R@raise\box\RC@box}
  \ht\RC@box=\ht\R@box
  \dp\RC@box=\dp\R@box
  \ifKV@ebproof@updown
    \setbox\C@box=\vtop{%
      \moveright\A@shift\box\A@box
      \hrule height 0pt
      \moveright\R@shift\box\RC@box%
      \hrule height 0pt
      \moveright\B@shift\box\B@box}%
  \else
    \setbox\C@box=\vbox{%
      \moveright\B@shift\box\B@box
      \hrule height 0pt
      \moveright\R@shift\box\RC@box%
      \hrule height 0pt
      \moveright\A@shift\box\A@box}%
  \fi
  \ebproof@push{C}%
  \endgroup}
\def\ebproof@rule@none{%
  \kern\ebproof@rulemargin
}
\def\ebproof@rule@simple{%
  \kern\ebproof@rulemargin
  \hrule height \ebproof@thickness\relax
  \kern\ebproof@rulemargin
  }
\let\ebproof@rule@do=\ebproof@rule@simple
\def\ebproof@rule@double{%
  \kern\ebproof@rulemargin
  \hrule height \ebproof@thickness
  \kern\ebproof@rulesep
  \hrule height \ebproof@thickness
  \kern\ebproof@rulemargin}
\def\ebproof@rule@dashed{%
  \kern\ebproof@rulemargin%
  \hbox to \hsize{%
    \kern-1pt%
    \leaders\hbox{\kern1pt%
      \vrule height \ebproof@thickness width 5\ebproof@thickness%
      \kern1pt}\hfill
    \kern-1pt}%
  \kern\ebproof@rulemargin}
\def\ebproof@alter#1{%
  \begingroup
  \ebproof@pop{A}%
  \setbox\A@box=\hbox{{#1\box\A@box}}%
  \ebproof@push{A}%
  \endgroup}
\def\ebproof@delims#1#2{%
  \begingroup
  \ebproof@pop{TREE}%
  \ebproof@localbox{@SHIFTED}=%
    \hbox{$\vcenter{\copy\TREE@box}$}%
  \ebproof@localbox{@LEFT}=%
    \hbox{$#1\vrule height \ht\@SHIFTED depth \dp\@SHIFTED width 0pt\right.$}%
  \ebproof@localbox{@RIGHT}=%
    \hbox{$\left.\vrule height \ht\@SHIFTED depth \dp\@SHIFTED width 0pt#2$}%
  \ebproof@localdimen{dy}
  \dy=\dp\@SHIFTED
  \advance\dy-\dp\TREE@box
  \ebproof@alloc{A}=%
    \hbox{\raise\dy\hbox{\copy\@LEFT\box\@SHIFTED\copy\@RIGHT}}%
  \A@left=\wd\@LEFT \advance\A@left\TREE@left
  \A@right=\wd\@RIGHT \advance\A@right\TREE@right
  \A@axis=\wd\@LEFT \advance\A@axis\TREE@axis
  \ebproof@push{A}%
  \endgroup}
\def\ebproof@hypo@parse#1&#2&#3\ebproof@hypo@stop{
  \ifthenelse{\equal{#3}{}}%
    {\ebproof@pushsimple{\ebproof@template{#1}}}%
    {\ebproof@pushsplit
      {\ebproof@lefttemplate{#1}}%
      {\ebproof@righttemplate{#2}}}}
\newcommand\ebproof@hypo[2][]{%
  {\setkeys{ebproof}{#1}\ebproof@hypo@parse#2&&\ebproof@hypo@stop}}
\def\ebproof@infer{%
  \@ifnextchar[{\ebproof@infer@}{\ebproof@infer@[]}}
\def\ebproof@infer@[#1]#2{%
  \@ifnextchar[{\ebproof@infer@@{#1}{#2}}{\ebproof@infer@@{#1}{#2}[]}}
\def\ebproof@infer@@#1#2[#3]#4{%
  \ebproof@joinh@multi{#2}%
  \ebproof@hypo{#4}%
  {\setkeys{ebproof}{#1}%
   \ifthenelse{\equal{#3}{}}{}{\setkeys{ebproof}{rightlabel={#3}}}%
   \ebproof@joinv}}
\def\ebproof@ellipsis#1#2{%
  \ebproof@pushsimple{$\vdots$}%
  \ebproof@pushsimple{#1}%
  {\ebproof@hsep=\ebproof@labelsep\relax \ebproof@joinright}%
  {\let\ebproof@rule@do=\relax \ebproof@joinv}%
  \ebproof@hypo{#2}%
  {\let\ebproof@rule@do=\relax \ebproof@joinv}%
  }
\def\ebproof@begin{%
  \edef\ebproof@start@level{\the\ebproof@level}%
  \setbox1=\vbox\bgroup
  \let\Hypo=\ebproof@hypo
  \let\Infer=\ebproof@infer
  \let\Ellipsis=\ebproof@ellipsis
  \let\Alter=\ebproof@alter
  \let\Delims=\ebproof@delims}
\def\ebproof@end{%
  \egroup
  \ebproof@pop{X}%
  \ifnum\ebproof@level=\ebproof@start@level\else
    \PackageError{ebproof}{Malformed proof tree}{%
      Some hypotheses were declared but not used in this tree.}%
  \fi
  \ifKV@ebproof@center
    \hbox{$\vcenter{\hbox{\box\X@box}}$}%
  \else
    \box\X@box
  \fi
  \global\setbox\ebproof@box@temp=\box1}
\newenvironment{prooftree*}[1][]{%
  \center
  \setkeys{ebproof}{#1}%
  \leavevmode\ebproof@begin
}{%
  \ebproof@end
  \endcenter}
\newcommand{\noindentnext}{\par\@afterindentfalse\@afterheading} %
\declaretheoremstyle[
spaceabove=0.5\baselineskip, spacebelow=0.5\baselineskip,
headfont=\normalfont\bfseries,
notefont=\normalfont\bfseries,
headpunct=.,
bodyfont=\itshape,
preheadhook=,%
postheadhook=,%
prefoothook=,%
postfoothook=\aftergroup\noindentnext%
]%
{general}
\declaretheoremstyle[
spaceabove=0.5\baselineskip, spacebelow=0.5\baselineskip,
headfont=\normalfont\itshape,
notefont=\normalfont\itshape,
bodyfont=\normalfont,
]
{remark}
\declaretheoremstyle[
spaceabove=0.5\baselineskip, spacebelow=0.5\baselineskip,
headfont=\normalfont\bf,
notefont=\normalfont\bf,
bodyfont=\normalfont,
]
{notation}
\declaretheoremstyle[
spaceabove=0.5\baselineskip, spacebelow=0.5\baselineskip,
headpunct=~\raisebox{0.5pt}{$\mathsmaller\blacktriangleright\,$},
headfont=\normalfont\itshape,
notefont=\normalfont\itshape,
bodyfont=\normalfont,
qed=\raisebox{0.5pt}{$\mathsmaller\blacktriangleleft$}
]
{proof}
\declaretheoremstyle[
spaceabove=0.5\baselineskip, spacebelow=0.5\baselineskip,
headpunct=~\raisebox{0.5pt}{$\mathsmaller\blacktriangleright\,$},
headfont=\normalfont\itshape,
notefont=\normalfont\itshape,
bodyfont=\normalfont,
qed=\raisebox{0.5pt}{}
]
{manual_qed_proof}
\newcommand{\mqed}{\hfill\raisebox{0.5pt}{$\mathsmaller\blacktriangleleft$}}
\declaretheorem[style=general,numberwithin=section]{definition}
\declaretheorem[style=general,sibling=definition]{theorem}
\declaretheorem[style=general,sibling=definition]{proposition}
\declaretheorem[style=general,sibling=definition]{corollary}
\declaretheorem[style=general,sibling=definition]{lemma}
\declaretheorem[style=general,sibling=definition]{notation}
\declaretheorem[style=remark,sibling=definition]{remark}
\declaretheorem[style=remark,sibling=definition]{example}
\declaretheorem[style=proof,numbered=no]{proof}
\declaretheorem[style=manual_qed_proof,numbered=no,name=Proof]{manual_qed_proof}
\newcommand{\term}[1]{\textnormal{#1}} %
\newcommand{\rep}[1]{\bar{#1}}
\newcommand{\letter}[1]{\symbf #1}
\newcommand{\simpleoring}[1]{{\algfont O}[\,#1\,]} %
\newcommand{\mnfas}[1]{\(\mathrm{2MFA\textnormal{+}S(#1)}\)}
\newcommand{\sansparammnfas}{\(\mathrm{2MFA\textnormal{+}S}\)\xspace}
\newcommand{\sansparamboldmnfas}{\(\mathrm{\mathbf{2MFA}\textnormal{\textbf{+}}\mathbf{S}}\)\xspace}
\newcommand{\modulo}{\emph{modulo}\xspace}
\newcommand{\cut}{cut\xspace}
\newcommand{\GoI}{{\sc GoI}\xspace}
\newcommand{\GofI}{geometry of interaction\xspace}
\newcommand{\locit}[1]{#1} %
\newcommand{\eg}{\locit{e.g.}~}
\newcommand{\ie}{\locit{i.e.}~}
\newcommand{\via}{\locit{via}~}
\renewcommand{\iff}{\locit{iff}~}
\newcommand{\resp}{resp.~}
\newcommand{\etc}{\locit{etc.}\xspace}
\newcommand{\etal}{\locit{et al.}\xspace}
\newcommand{\incise}[1]{---\,#1\,---}
\newcommand{\demiincise}[1]{---\,#1}
\newcommand{\compclass}[1]{\textsc{#1}}%
\newcommand{\obs}[1]{{#1}}%
\newcommand{\state}[1]{\textbf{#1}\xspace}%
\newcommand{\langage}[1]{\mathcal{#1}}%
\newcommand{\problem}[1]{\textsc{#1}}%
\newcommand{\symbf}{\mathtt}%
\newcommand{\algfont}{\mathcal}%
\newcommand{\positions}[1]{\vec{\mathtt{#1}}}%
\newcommand{\var}[1]{\mathit #1}%
\newcommand{\func}[1]{\underline{\mathtt{#1}}}%
\newcommand{\functerm}[1]{\underline{\term{#1}}}%
\newcommand{\queryf}[1]{\mathbf{#1}}
\newcommand{\NN}{\mathbb N}%
\newcommand{\sequ}[1]{\vec{#1}}%
\newcommand{\powerset}{\mathcal P}%
\newcommand{\eqdef}{:=}%
\DeclareMathOperator{\card}{Card}
\newcommand{\extlist}[2]{#1,\,\dots\,,#2}%
\newcommand{\extinflist}[1]{#1,\,\dots}%
\newcommandx{\extset}[3][1=]{#1\{\,#2,\,\dots\,,#3\,#1\}}%
\newcommandx{\set}[3][1=]{#1\{\:#2 \ #1|\ #3\:#1\}}%
\newcommand{\void}{\varnothing}%
\newcommand{\unit}{I}%
\newcommand{\pr}{\texttt{\textit{p}}}%
\newcommand{\isnilp}[1]{\mathbf{Nil}(#1)} %
\newcommand{\Nat}{\texttt{Nat}}
\newcommand{\Logspace}{\compclass{Logspace}\xspace}%
\newcommand{\DLogspace}{\compclass{DLogspace}\xspace}%
\newcommand{\NLogspace}{\compclass{NLogspace}\xspace}%
\newcommand{\coNLogspace}{\compclass{co-NLog\-space}\xspace}%
\newcommand{\Ptime}{\compclass{Ptime}\xspace}%
\newcommand{\FPtime}{\compclass{FPtime}\xspace}%
\newcommand{\flow}{\leftharpoonup}%
\newcommand{\sflow}{\leftrightharpoons}%
\newcommand{\wordterm}[3]{\letter #1\p #2 \p \var x \p \var y \p \head(\symbf #3)}%
\newcommand{\wordrep}[2]{\rep{#1}_#2}%
\newcommand{\dummy}{\star}%
\newcommand{\op}[2]{\texttt{\textsc{op}}_{#1,\,#2}}%
\newcommand{\trunc}[1]{T_{#1}}%
\newcommand{\inc}[1]{#1^\uparrow}%
\newcommand{\dec}[1]{#1^\downarrow}%
\newcommand{\sat}{\texttt{\textit{satur}}}%
\newcommand{\short}{\texttt{\textit{short}}}%
\newcommand{\alphabet}{\Sigma_{\lft\rgt}}
\newcommand{\stack}{\algfont S\mathit{tack}}%
\newcommand{\balanced}{\algfont R_{\mathbf b}}%
\newcommand{\Res}{\algfont R}%
\newcommand{\usring}{\Res}%
\newcommand{\unused}[1]{{\symbf #1}^{\perp}} %
\newcommand{\stackobs}{\algfont O^{\mathbf{b+s}}}%
\newcommand{\denc}[1]{[#1]}
\newcommand{\genc}[1]{\langle#1\rangle}
\newcommand{\Var}{\texttt{{var}}}%
\newcommand{\uequ}{\,\texttt{\smaller=}\,}%
\newcommand{\lang}[1]{\langage{L}(#1)}%
\newcommand{\lft}{\mathtt l}
\newcommand{\rgt}{\mathtt r}
\newcommand{\LR}{\{\lft,\rgt\}}
\newcommand{\size}[1]{|#1|}
\newcommand{\head}{\texttt{\textsc{h\!e\!a\!d}}} %
\newcommand{\auxp}{\texttt{\textsc{a\!u\!x}}} %
\newcommand{\h}{\texttt{\textit{h}}} %
\newcommand{\start}{\star}
\newcommand{\p}{\!\mathrel{\textstyle\mathsmaller\bullet}\!}
\newcommand{\ptext}{$\mathrel{\mathsmaller\bullet}$\xspace} %
\newcommand{\algo}[2]{\textsc{#1}_{#2}}%
\newcommand{\automate}[1]{#1}
\newcommand{\org}{\texttt{or}\xspace}
\newcommand{\andg}{\texttt{and}\xspace}
\newcommand{\notg}{\texttt{not}\xspace}
\newcommand{\zerog}{\texttt{0}\xspace}
\newcommand{\oneg}{\texttt{1}\xspace}
\newcommand{\gate}[3]{#1\rhd^{#2}#3}
\newcommand{\ore}[3]{\gate{#1,#2}{\org}{#3}}
\newcommand{\ande}[3]{\gate{#1,#2}{\andg}{#3}}
\newcommand{\note}[2]{\gate{#1}{\notg}{#2}}
\newcommand{\zeroe}[1]{\gate{}{\zerog}{#1}}
\newcommand{\onee}[1]{\gate{}{\oneg}{#1}}
\newcommand{\funv}[1]{\mathtt #1}
\newcommand{\funvn}[1]{\overline{\mathtt #1}}
\newcommand{\enc}[1]{[#1]}
\newcommand{\bstack}{\boxdot}
\newcommand{\tridro}{\vartriangleright}
\newcommand{\trigau}{\vartriangleleft}
\newcommand{\onstack}[1]{\texttt{#1}\xspace}
\newcommand{\pop}{\onstack{pop}}
\newcommand{\push}{\onstack{push}}
\newcommand{\Mbis}{\head}
\newcommand{\nary}[1]{\auxp_{#1}}
\newcommand{\flatpush}[1]{\mathtt l^{\texttt{push}}_{#1}}
\newcommand{\flatpop}[1]{\mathtt l^{\texttt{pop}}_{#1}}
\newcommand{\sepsection}{\iftoggle{newpage}{\newpage}{}}
\begin{document}

\title{Memoization for Unary Logic Programming: Characterizing \Ptime}

\author{
\IEEEauthorblockN{Clément Aubert}
\IEEEauthorblockA{Inria\\Université Paris-Est, LACL (EA 4219), UPEC,\\F-94010 Créteil, France}
\and
\IEEEauthorblockN{Marc Bagnol}
\IEEEauthorblockA{Aix Marseille Université,\\CNRS, Centrale Marseille, I2M UMR 7373,\\13453, Marseille, France}
\and
\IEEEauthorblockN{Thomas Seiller}
\IEEEauthorblockA{IHÉS}
\thanks{This work was partly supported by the ANR-10-BLAN-0213 Logoi, the ANR-11- BS02-0010 Récré and the ANR-11-INSE-0007 REVER.}
}

\maketitle

\begin{abstract}
We give a characterization of deterministic polynomial time
computation based on an algebraic structure called the resolution
semiring, whose elements can be understood as logic programs or sets of rewriting rules
over first-order terms.

More precisely, we study the restriction of this framework to terms
(and logic programs, rewriting rules) using only unary symbols. We prove it is complete
for polynomial time computation, using an encoding of pushdown
automata. We then introduce an algebraic counterpart of the
memoization technique in order to show its \Ptime soundness.

We finally relate our approach and complexity
results to complexity of logic programming. As an application of our techniques, we
show a \Ptime-completeness result for a class of logic programming
queries which use only unary function symbols.

\end{abstract}

\begin{IEEEkeywords}
Implicit Complexity,
Resolution,
Logic Programming,
Polynomial Time,
Proof Theory,
Pushdown Automata,
Geometry of Interaction.
\end{IEEEkeywords}

\section{Introduction}
Complexity theory focuses on questions related to resource usage of computer programs,
such as the amount of time or memory a given program will need to solve a problem.%

Complexity classes are defined as sets of problems that can be solved by algorithms
whose executions
need comparable amounts of resources. For instance, the class \Ptime is the set of predicates
over binary words that can be decided by a Turing machine implementing an algorithm whose execution time is bounded
by a polynomial in the size of its input.

However, these definitions depend on the notion of machine %
and cost-model considered, for the efficiency of an algorithm is sensible to these. %
The \enquote{invariance thesis}~\cite{Boas1990} is a way to bypass this limitation by defining what \enquote{a reasonable model} is: all the \enquote{reasonable} models (endowed with cost models) can simulate each other with a \enquote{reasonable} overhead.
The bootstrap for this notion to apply largely was to remark that polynomial bounds on execution time are robust,
as the class of problems captured by different models where this bound coincide.
The definition is still machine-dependent, \emph{but not dependent of a particular model of computation}.

One of the main motivations for an implicit computational complexity (ICC) theory is to find completely machine-independent characterizations of complexity classes.
The aim is to characterize classes not \enquote{\it by constraining the amount of resources a machine is allowed to use, but rather by imposing linguistic constraints on the way algorithms are formulated.} \cite[p.~90]{DalLago2012a}
This has been already achieved \via different approaches, one of which is based on considering
restricted programming languages or computational principles~\cite{Bellantoni1992a,Leivant1993,Neergaard2004}.

A number of results also arose from proof theory through the study of subsystems of linear logic~\cite{Girard1987}.
More precisely, the Curry-Howard \incise{or \emph{proofs as programs}} correspondence expresses a deep
relation between formal proofs and typed programs. For instance, one can define a formula $\Nat$ which
corresponds to the type of binary integers, in the sense that a given (\cut-free, \ie normal, already evaluated) proof of this type
represents a given natural number.
A proof of the formula $\Nat\Rightarrow \Nat$ then corresponds to an algorithm computing
a function from integers to integers, where the computation
itself amounts to a rewriting on proofs: the \cut-elimination procedure.

By restricting the rules of the logical system, one obtains a subsystem where
\emph{less} proofs of type $\Nat\Rightarrow \Nat$ can be written,
hence \emph{less} algorithms can be represented. In a number of such restricted systems
the class of accepted proofs, \ie of programs, corresponds%
\footnote{We mean \emph{extensional} correspondence: they compute the same functions.}
to some complexity class:
elementary complexity~\cite{Girard1995,Danos2003}, polynomial time~\cite{Lafont2004,Baillot2004},
logarithmic~\cite{Lago2010d} and polynomial~\cite{Gaboardi2012} space.

More recently, new methods for obtaining implicit characterizations of complexity classes
based on the \emph{\GofI} (\GoI) research program~\cite{Girard1989b} have been developed.
The \GoI approach offers a more abstract and algebraic point of view
on the \cut-elimination procedure of linear logic.
One works with a set of \emph{untyped programs} represented as some geometric objects,
\eg graphs \cite{Danos1990,Seiller2012} or generalizations of graphs \cite{Seiller2014a},
bounded linear maps between Hilbert spaces (operators) \cite{Girard1989a,Girard2011a,Seiller2014b},
clauses (or \enquote{flows}) \cite{Girard1995a,Bagnol2014}.
This set of objects is then considered together with an abstract notion of execution, seen as an interactive process:
a function does not process a static input,
but rather communicate with it, asking for values, reading its answers, asking for another value, \etc

Types can then be defined as sets of program representations sharing comparable behaviors.
For instance the type $\Nat \Rightarrow\Nat$ is the set of untyped programs which, given an integer as
input, produce an integer as output.

This approach based on the \GoI differs from previous ICC works using linear logic in that they do not
rely on a restriction of some type system, but rather on a restriction on the set of
program representations considered. Still, they benefit from previous works in type theory:
for instance the representation of integers used here comes from their representation
in linear logic, translated in the \GoI setting, whose interactive point of view
on computation has proven crucial in characterizing logarithmic space computation~\cite{Lago2010d}.

The first results that used those innovative considerations were based on operator algebras~\cite{Girard2012,Aubert2014ctemp,Aubert2015temp}.
Here we consider a more syntactic
flavor of the \GoI where untyped programs are represented in the so-called
\emph{resolution semiring}~\cite{Bagnol2014}, a semiring based on the resolution rule~\cite{Robinson1965}
and a specific class of logic programs.
This setting presents some advantages: it avoids the involvement of operator algebras theory,
it eases the discussions in terms of complexity (we manipulate first-order terms, which have natural
notions of size, height, \etc) and it offers a straightforward connection with complexity of
logic programming~\cite{Dantsin2001}.

Previous works in this direction led to characterizations of logarithmic space predicates \Logspace and
\coNLogspace~\cite{Aubert2014,Aubert2014b}, by considering for instance restrictions on the height of variables.

Our main contribution here is a characterization of the class \Ptime by studying a natural restriction,
namely that one is allowed to use exclusively unary function symbols.
Pushdown automata%
\footnote{More precisely, \(2\)-way \(k\)-head non-deterministic finite automata with pushdown stack. See \autoref{def-automata}.}
are easily related to this simple restriction, for they
can be represented as logical programs satisfying
this \enquote{unarity} restriction. This will imply the completeness of the model under consideration
for polynomial time predicates.

We then complete the characterization by showing that any such unary logic program can be decided
in polynomial time. This part of the proof consists in an adaptation of S.~Cook's memoization
technique~\cite{Cook1971a} to the context of logic programs.

The last part of the paper presents consequences of these results in terms of complexity of
logic programming, namely that the corresponding class of queries are \Ptime-complete, when considering
combined complexity~\cite[p.~380]{Dantsin2001}.

Compared to other ICC characterizations of \Ptime, and in particular those coming from proof theory,
our results have a simple formulation
and provide an original point of view on complexity classes.

A byproduct of this work is to provide a %
method to test membership in \Ptime:
if one can rephrase a problem with clauses \(H \dashv B\) using only unary function symbols,
then our result %
ensures that the problem lies in \Ptime.
Conversely if a problem cannot be rephrased that way, it %
lies outside of \Ptime.

\subsection{Outline of the paper}

We begin by giving in \autoref{semiring} the formal definition of the resolution semiring;
then briefly explain how words can be represented in this structure (\autoref{sec_rep})
and recall the characterization of logarithmic space obtained in earlier work (\autoref{sec_logspace}).
In \autoref{sec_stack} we introduce the restricted semiring that will be under study in this paper:
the $\stack$ semiring.

The next two sections are respectively devoted to the completeness and soundness results for \Ptime.
For completeness, we first review the fact that multi-head finite automata with pushdown stack
characterize \Ptime and review the memoization technique in this case (\autoref{sec_cook}), and then show how to represent them as elements built from the $\stack$ semiring (\autoref{sec_completeness}).
The soundness result is then obtained by adapting memoization to the $\stack$ semiring.
This adaptation, which we call the \emph{saturation} technique, is introduced in \autoref{saturation-section}.

In the last section, we formulate our results in terms of complexity of logic programming.
In particular, we explain how elements of the $\stack$ semiring can be seen as
a particular kind of unary logic programs to which the saturation technique can be applied.
This allows us to show that the combined complexity problem for unary logic program is \Ptime-complete.

As an illustration%
, we show in \autoref{sec_cvp} that the circuit value problem can be solved with
this method.

\sepsection
\section{The Resolution Semiring}
\subsection{Flows and Wirings}\label{semiring}
Let us begin with some reminders and notations on first-order terms and unification theory.

\begin{notation}[terms]\label{not_terms}
We consider first-order terms, written $\extinflist{\term{t}, \term{u}, \term{v}}$, built from variables
and function symbols with assigned finite arity. Symbols of arity $0$ will be called \emph{constants}.

Sets of variables and of function symbols of any arity are supposed infinite. Variables will be noted in italics font (\eg $x,y$) and function symbols in typewriter font
(\eg $\symbf c, \symbf f(\cdot),\symbf g(\cdot,\cdot)$).

We distinguish a binary function symbol \ptext (in infix notation) and a constant symbol $\start$.
We will omit the parentheses for \ptext and write \(\term{t}\p \term{u}\p \term{v}\) for \(\term{t}\p(\term{u}\p \term{v})\).

We write $\Var(\term{t})$ the set of variables occurring in the term $\term{t}$ and say that $\term{t}$ is \emph{closed} if $\Var(\term{t})=\void$.
The \emph{height} $\h(\term{t})$ of a term $\term{t}$ is the maximal distance between its root and leaves; a variable occurrence's height in \(\term{t}\) is its distance to the root. \label{def-height}

We will write $\theta \term{t}$ the result of applying the substitution $\theta$ to the term $\term t$
and will call \emph{renaming} a substitution $\alpha$ that bijectively maps variables to variables.
\end{notation}

We will be concerned with formal solving of equations of the form $\term t\uequ\term u$ where
$\term t$ and $\term u$ are terms.
Let us introduce a precise formulation of this problem and some associated vocabulary.

\begin{definition}[unification, matching and disjointness]\label{disjoint}
Two terms $\term{t}, \term{u}$ are:
\begin{itemize}
[nosep,
,noitemsep]
\item \emph{unifiable} if there exists a substitution $\theta$
\incise{a \emph{unifier} of $\term{t}$ and $\term{u}$}
such that \( \theta \term{t} = \theta \term{u}\).
If any other unifier of $\term{t}$ and $\term{u}$ is an instance of $\theta$,
we say $\theta$ is the \emph{most general unifier} (MGU) of $\term{t}$ and $\term{u}$;
\item \emph{matchable} if $\term t',\term u'$ are unifiable,
where $\term t',\term u'$ are renamings
of $\term{t},\term{u}$ such that $\Var(\term t') \cap \Var(\term u') = \varnothing$;
\item \emph{disjoint} if they are not matchable.
\end{itemize}
\end{definition}

A fundamental result of unification theory is that when two terms are unifiable,
a MGU exists and is computable.
More specifically, the problem of deciding whether two terms are unifiable is $\Ptime$-complete
\cite[Theorem 1]{Dwork1984}.

The notion of MGU allows to formulate the \emph{resolution rule}, a key concept of
logic programming that defines the composition of Horn clauses
(expressions of the form $H \dashv \extlist{B_1}{B_n}$):

\begin{prooftree*}
\Hypo{V &\dashv \extlist{T_1}{T_n}}
\Infer[rulestyle=none]{1}{H &\dashv \extlist{B_1}{B_m},U}
\Hypo{\Var(U)\cap\Var(V)=\varnothing}                                      %
\Infer[rulestyle=none]{1}{\theta\text{ is a MGU of } U \text{ and }V}
\Infer{2}[\tt Res]{\theta H \dashv \extlist{\theta B_1}{\theta B_m},\extlist{\theta T_1}{\theta T_n}}
\end{prooftree*}

Note that the condition on variables implies that we are matching $U$ and $V$ rather than unifying them.
In other words, the resolution rule deals with variables as if they were bounded.

From this perspective, \enquote{flows} \incise{defined below} are a specific type of Horn clauses $H \dashv B$, with exactly one formula $B$ on the right of $\dashv$ and all the variables of $H$ already appearing in $B$. The product of flows will be defined as the resolution rule restricted to this specific type of clauses.

\begin{definition}[flow]\label{def_flow}
A \emph{flow} is an ordered pair $f$ of terms $f\eqdef \term{t}\flow \term{u}$,
with $\Var(\term{t})\subseteq\Var(\term{u})$.
Flows are considered up to renaming: for any renaming $\alpha$,
$\term{t} \flow \term{u}\,=\,\alpha \term{t} \flow \alpha \term{u}$.
\end{definition}

A flow $\term{t}\flow\term{u}$ can also be understood as a rewriting rule over the set of first-order terms. For instance, the flow $\symbf{g}(x)\flow\symbf{f}(x)$ corresponds to the following rewriting rule: terms of the form $\symbf{f}(\term{v})$ where $\term{v}$ is a term are rewritten as $\symbf{g}(\term{v})$ and all other terms are left unchanged.

We will soon define %
the \emph{product} of flows which provides a way of composing them;
from the term-rewriting perspective, this operation corresponds to composing two rules
\incise{when possible, \ie when the result of the first rewriting rule allows the
application of the second} into a single one.

For instance, one can compose the flows $f_{1}\eqdef\symbf{h}(x)\flow\symbf{g}(x)$ and $f_{2}\eqdef\symbf{g}(x)\flow\symbf{f}(x)$ to produce the flow $f_{1}f_{2}=\symbf{h}(x)\flow\symbf{f}(x)$. Notice by the way that this (partial) product is not commutative as composing these rules the other way around is impossible, \ie $f_{2}f_{1}$ is not defined.

\begin{definition}[product of flows]
Let $\term{t}\flow \term{u}$ and $\term{v}\flow \term{w}$ be two flows.
Suppose we picked representatives of the renaming classes
such that $\Var(\term{u})\cap\Var(\term{v})=\varnothing$.

The \emph{product} of $\term{t}\flow \term{u}$ and $\term{v}\flow \term{w}$ is defined
when $\term{u}$ and $\term{v}$ are unifiable, with MGU $\theta$, as
$(\term{t}\flow \term{u})(\term{v}\flow \term{w})\,\eqdef \:\theta \term{t}\flow \theta \term{w}$.
\end{definition}

We now define wirings, which are simply finite sets of flows and therefore correspond to logic programs.
From the term-rewriting perspective they are just sets of rewriting rules.
The definition of product of flows is naturally lifted to wirings.

\begin{definition}[wiring]\label{def_wirings}
A \emph{wiring} is a finite set of flows.
Their product is defined as
$FG\eqdef \set{fg}{f \in F, \, g \in G, \: fg \text{ defined}}$.
The \emph{resolution semiring} $\usring$ is the set of all wirings.
\end{definition}

The set of wirings $\usring$ indeed enjoys a structure of semiring%
\footnote{A \emph{semiring} is a set $R$ equipped with two operations $+$ (the sum) and $\times$ (the product, whose symbol is usually omitted), and an element $0\in R$ such that: $(R,+,0)$ is a commutative monoid;
$(R,\times)$ is a \emph{semigroup}, \ie a monoid which may not have a neutral element; the product distributes over the sum; the element $0$ is absorbent: $0r=r0=0$ for all $r\in R$.}.
We will use an \emph{additive notation} for sets of flows to highlight this situation:

\begin{itemize}
\item The symbol $+$ will be used in place of $\cup$, and we write sets as sums of their elements: $\extset{f_1}{f_n}\eqdef f_1 + \cdots + f_n$.
\item We denote by $0$ the empty set, \ie the unit of $+$.
\item We have a unit for the product, the wiring $\unit \eqdef x \flow x$.
\end{itemize}

As we will always be working within $\usring$, the term \enquote{semiring}
will be used instead of \enquote{subsemiring of $\usring$}.

Finally, let us recall the notion of nilpotency in a semiring and extend the notion of height (of terms) to flows and wirings.

\begin{definition}[height]\label{def_height}
The \emph{height} $\h(f)$ of a flow $f=\term{t} \flow \term{u}$ is defined as $\max\{\h(\term{t}),\h(\term{u})\}$.
A wiring's \emph{height} is defined as $\h(F)=\max\set{h(f)}{f\in F}$.
By convention $\h(0)=0$.
\end{definition}

\begin{definition}[nilpotency]\label{def_nilp}
A wiring ${F}$ is \emph{nilpotent} \incise{written $\isnilp{F}$} if and only if ${F}^n=0$ for some $n$.
\end{definition}

The above classical notion from abstract algebra has a specific reading in our case of study.
In terms of logic
programming, it means that all chains obtained by applying  the resolution rule to the set of clauses
we consider cannot be longer than a certain bound.
From the point of view of rewriting, it means that the
set of rewriting rules we consider is terminating
with a uniform bound on the length of rewriting chains
\demiincise{note however that we consider rewriting that occur only at the root of terms, while the usual
notion from term rewriting systems~\cite{Baader1998a} allows in-context rewriting}.

\subsection{Representation of Words and Programs}\label{sec_rep}

This section explains and motivates the representation of words as flows.
By studying their interactions with wirings from a specific semiring, notions of program and language are defined.

First, let us see how the binary function symbol $\p$ used to construct terms can be extended to build flows and then
semirings.

\begin{definition}\label{ptensor}
Let $\,\term{u}\flow \term{v}\,$ and $\,\term{t}\flow \term{w}\,$ be two flows.
Suppose we have chosen representatives of their renaming classes that have disjoint sets of variables.

We define
$(\term{u}\flow \term{v}) \p (\term{t}\flow \term{w}):=\: \term{u}\p \term{t} \flow \term{v}\p \term{w}$.
The operation is extended to wirings by $(\sum_i f_i)\p(\sum_j g_j):= \sum_{i,j} f_i\p g_j$.

Then, given two semirings $\algfont A$ and $\algfont B$, we define the semiring %
$\algfont A\p \algfont B:= \,\set{\sum_i F_i\p G_i}{F_i\in\algfont A\,,\,G_i\in \algfont B}$.
\end{definition}

The operation indeed defines a semiring because for any wirings $F,F',G,G'$
we have $(F\p G)(F\p G)=FF'\p GG'$.
Moreover, we carry on the convention of writing $\algfont A \p \algfont B \p \algfont C$
for $\algfont A \p (\algfont B \p \algfont C)$.

\begin{notation}
We write $\term{t}\sflow\term{u}$ the sum $\term{t}\flow\term{u}+\term{u}\flow\term{t}$.
\end{notation}

\begin{definition}[word representations]\label{word-rep}
From now on, we suppose fixed an infinite set of constant symbols $\symbf P$ (the \emph{position constants})
and a finite alphabet $\Sigma$ disjoint from $\symbf P$ with $\dummy\not\in\Sigma$
(we write $\Sigma^\ast$ the set of words over $\Sigma$).

Let $W = \symbf{c}_1\cdots \symbf{c}_n\in\Sigma^\ast$ and $p=\extlist{\symbf p_0,\symbf p_1}{\symbf p_n}$ be pairwise distinct elements of $\symbf P$.

Writing $\symbf p_{n+1}=\symbf p_{0}$ and $\symbf{c}_{n+1} = \symbf{c}_{0} = \start$,
we define the \emph{representation} of $W$ associated with $\extlist{\symbf p_0,\symbf p_1}{\symbf p_n}$
as the following wiring: %
\[
\rep{W}_{p}= \sum_{i=0}^{n}{\wordterm{c_{i}}{\rgt}{p_{i}}}
\sflow {\wordterm{c_{i+1}}{\lft}{p_{i+1}}}
\]
\end{definition}

In this definition, the position constants represent memory cells storing the symbols
$\star$, $\symbf{c}_{1}$, $\symbf{c}_{2}$, \dots.

The representation of words is \emph{dynamic}, \ie we may think intuitively of
\emph{movement instructions} from a symbol to the next or the previous (hence the choice of symbols
$\lft$ and $\rgt$ for \enquote{left/previous} and \enquote{right/next}) for some kind of
automaton reading the input. More details on this will be given in the proof of \autoref{th_pcomplete}.

Hence, for a given position constant $\symbf{p}_{i}$, we use terms $\symbf{c}_{i}\p \rgt$
and $\symbf{c}_{i}\p \lft$ which will be linked (by flows of the representation) to elements
$\symbf{c}_{i+1}\p \lft$ at position $\symbf{p}_{i+1}$ and $\symbf{c}_{i-1}\p \rgt$ at position
$\symbf{p}_{i-1}$ respectively.

Note moreover that the %
representation of the input
is circular (this is a consequence of using the Church encoding of words),
as we take $\symbf{c}_{n+1}=\symbf{c}_{0}=\star$.
Flows representing the word $\symbf{c}_1\cdots \symbf{c}_n$ can be pictured as follows:
\label{dessin-inputs}
\begin{center}
\begin{tikzpicture}[x=0.9cm,y=1.2cm,inner sep=2pt, outer sep=1pt]
\node (S) at (0,0) {$\symbf{p}_{0}$};
\node (SS) at (S.south) {};
\node (So) at ($(SS.south)-(0,0.05)$) {$\star\p \rgt$};
\node (Si) at ($(So.south)-(0,0.1)$) {$\star\p \lft$};
\node (SiL) at (So.north) {};
\node (SoL) at (Si.north) {};
\node (A) at ($(S)+(2,0)$) {$\symbf{p}_{1}$};
\node (AS) at (A.south) {};
\node (Ao) at ($(AS.south)-(0,0.05)$) {$\symbf{c}_{1}\p \rgt$};
\node (Ai) at ($(Ao.south)-(0,0.1)$) {$\symbf{c}_{1}\p \lft$};
\node (AiL) at (Ao.north) {};
\node (AoL) at (Ai.north) {};
\node (B) at ($(A)+(2,0)$) {$\symbf{p}_{2}$};
\node (BS) at (B.south) {};
\node (Bo) at ($(BS.south)-(0,0.05)$) {$\symbf{c}_{2}\p \rgt$};
\node (Bi) at ($(Bo.south)-(0,0.1)$) {$\symbf{c}_{2}\p \lft$};
\node (BiL) at (Bo.north) {};
\node (BoL) at (Bi.north) {};
 \node (C) at ($(B)+(2,0)$) {};
 \node (points) at ($(B)+(2,-0.4)$) {$\dots$};
 \node (CS) at ($(BS)+(1.5,0)$) {};
 \node (Co) at ($(Bo)+(1.5,0)$) {};
 \node (Ci) at ($(Bi)+(1.5,0)$) {};
 \node (CiL) at ($(BiL)+(1.5,0)$) {};
 \node (CoL) at ($(BoL)+(1.5,0)$) {};
\node (D) at ($(C)+(2,0)$) {$\symbf{p}_{n}$};
\node (DS) at (D.south) {};
\node (Do) at ($(DS.south)-(0,0.05)$) {$\symbf{c}_{n}\p \rgt$};
\node (Di) at ($(Do.south)-(0,0.1)$) {$\symbf{c}_{n}\p \lft$};
\node (DiL) at (Do.north) {};
\node (DoL) at (Di.north) {};
\node (Do2) at ($(Di)+(1,0)$) {};
\draw[-left to] ($(So.east)+(0,0.05)$) to ($(Ai.west)+(0,0.05)$) {};
\draw[-left to] ($(Ai.west)+(0,-0.05)$) to ($(So.east)+(0,-0.05)$) {};
\draw[-left to] ($(Ao.east)+(0,0.05)$) to ($(Bi.west)+(0,0.05)$) {};
\draw[-left to] ($(Bi.west)+(0,-0.05)$) to ($(Ao.east)+(0,-0.05)$) {};
\draw[-left to] ($(Bo.east)+(0,0.05)$) to ($(Ci.west)+(0,0.05)$) {};
\draw[-left to] ($(Ci.west)+(0,-0.05)$) to ($(Bo.east)+(0,-0.05)$) {};
 \draw[-left to] ($(Co.east)+(1,0.05)$) to ($(Di.west)+(0,0.05)$) {};
 \draw[-left to] ($(Di.west)+(0,-0.05)$) to ($(Co.east)+(1,-0.05)$) {};
\draw[-left to, rounded corners] ($(Si.south)+(0.05,0)$)  -- (0.05,-1) -- (9,-1) -- (9, -0.3) -- (8.6, -0.3);
\draw[-left to, rounded corners] (8.6, -0.2) -- (9.1, -0.2) -- (9.1,-1.1)  -- (-0.05,-1.1) --  ($(Si.south)+(-0.05,0)$);
\end{tikzpicture}
\end{center}

On the other hand, the notion of \emph{observation} will be the counterpart of a program in our construction. We
first give a general definition, that will be instantiated later to classes of observations that characterize specific complexity classes.
The important point here
is that we forbid an observation to use any position constant, in order to have it interact the same way
with all the representations $\wordrep Wp$ of a word $W$.

\begin{definition}[observation semiring]
We define the semirings $\unused P$ of flows that do not use the symbols in $\symbf P$; and $\alphabet$
the semiring generated by flows of the form $\symbf c\p \symbf d\flow\symbf c'\p \symbf d'$
with $\symbf c,\symbf c'\in\Sigma\cup\{\dummy\}$ and $\symbf d,\symbf d'\in \LR$.

We define the semiring of \emph{observations} as:
\[\algfont O\eqdef(\alphabet\p\usring)\cap\unused P\]
and the semiring of \emph{observations over the semiring $\algfont A$} as%
\[\simpleoring{\algfont A}\eqdef(\alphabet\p\algfont A)\cap\unused P\]
\end{definition}

The following theorem is a consequence~\cite[Theorem~IV.5]{Bagnol2014} of the fact that
observations cannot use position constants.

\begin{theorem}[normativity]
\label{normativityppty}
Let $\bar{W}_{p}$ and $\bar{W}_{q}$ be two representations of a word $W$ and $O$ an
observation.

Then $\isnilp{\obs{O}\bar{W}_{p}}$ if and only if $\isnilp{\obs{O}\bar{W}_{q}}$.
\end{theorem}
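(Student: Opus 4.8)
The plan is to show that the two wirings $\obs{O}\rep{W}_p$ and $\obs{O}\rep{W}_q$ are related by an automorphism of the resolution semiring $\usring$ that both preserves and reflects nilpotency. Since $\rep{W}_p$ and $\rep{W}_q$ are representations of the \emph{same} word $W$, they are built from the same alphabet symbols and the same movement markers $\lft,\rgt$, and differ only in the position constants used: $\symbf{p}_0,\dots,\symbf{p}_n$ for the first and $\symbf{q}_0,\dots,\symbf{q}_n$ for the second. The natural candidate is therefore the renaming $\phi$ of constants defined by $\phi(\symbf{p}_i)=\symbf{q}_i$. Because the $\symbf{q}_i$ are pairwise distinct and $\symbf P$ is infinite, this partial injection extends to a bijection $\phi\colon\symbf P\to\symbf P$; I extend it to act as the identity on every other symbol and on variables, and then lift it to terms, flows and wirings. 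It is well defined on flows since it commutes with variable renaming.

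First I would establish the key technical fact: a bijective renaming of constants is an automorphism of $\usring$, that is $\phi(F+G)=\phi(F)+\phi(G)$ and $\phi(FG)=\phi(F)\phi(G)$. Additivity is immediate. For the product it suffices to treat a single product of flows $(\term{t}\flow\term{u})(\term{v}\flow\term{w})$: since $\phi$ is an automorphism of the free term algebra (a bijective relabelling of the constant generators preserving arities), two terms are unifiable if and only if their images are, and if $\theta$ is an MGU of $\term{u}$ and $\term{v}$ then the substitution $x\mapsto\phi(\theta x)$ is an MGU of $\phi\term{u}$ and $\phi\term{v}$. Transporting the MGU through the definition of the product then yields $\phi\big((\term{t}\flow\term{u})(\term{v}\flow\term{w})\big)=(\phi\term{t}\flow\phi\term{u})(\phi\term{v}\flow\phi\term{w})$, both products being defined under exactly the same condition (unifiability of $\term{u}$ and $\term{v}$). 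This verification \incise{preservation and transport of MGUs under an invertible constant renaming} is the main obstacle; invertibility of $\phi$ is precisely what guarantees that no unification opportunity is either created or destroyed, and everything else is bookkeeping.

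Next I would use the defining property of observations. As $\obs{O}$ lies in $\unused P$, it contains no position constant, so $\phi(\obs{O})=\obs{O}$. A direct inspection of \autoref{word-rep} shows that applying $\phi$ to $\rep{W}_p$ replaces each occurrence of $\head(\symbf{p}_i)$ by $\head(\symbf{q}_i)$ and leaves everything else untouched, so $\phi(\rep{W}_p)=\rep{W}_q$; the cyclic identification $\symbf{p}_{n+1}=\symbf{p}_0$ is respected because $\phi(\symbf{p}_0)=\symbf{q}_0=\symbf{q}_{n+1}$. Combining these facts with the homomorphism property gives $\phi(\obs{O}\rep{W}_p)=\phi(\obs{O})\phi(\rep{W}_p)=\obs{O}\rep{W}_q$, and hence by induction $\phi\big((\obs{O}\rep{W}_p)^n\big)=(\obs{O}\rep{W}_q)^n$ for every $n$.

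Finally, since $\phi$ is a bijection on flows it sends the empty wiring to the empty wiring and any nonempty wiring to a nonempty one, so $X=0$ if and only if $\phi(X)=0$ for every wiring $X$. Taking $X=(\obs{O}\rep{W}_p)^n$ gives that $(\obs{O}\rep{W}_p)^n=0$ iff $(\obs{O}\rep{W}_q)^n=0$, and quantifying over $n$ yields exactly $\isnilp{\obs{O}\rep{W}_p}$ if and only if $\isnilp{\obs{O}\rep{W}_q}$, as required.
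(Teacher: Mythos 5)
Your proposal is correct and takes essentially the approach the paper itself relies on: the paper gives no inline proof, deferring to \cite[Theorem~IV.5]{Bagnol2014} with the remark that the statement is \enquote{a consequence of the fact that observations cannot use position constants}, and your constant-renaming map $\phi$ is precisely the standard formalization of that remark. The three facts you isolate \incise{$\phi$ is a semiring automorphism because MGUs transport along a bijective, arity-preserving relabelling; $\phi(\obs{O})=\obs{O}$ since $\obs{O}\in\unused P$; and $\phi(\rep{W}_p)=\rep{W}_q$} are exactly what the cited result packages, so nothing is missing.
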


With this theorem, we can safely define how a word can be accepted by an observation: %
the notion is independent of the specific choice of a representation of position constants.

\begin{definition}[accepted language]
\label{lang-obs}
Let $\obs{O}$ be an observation.
We define the \emph{language accepted by $\obs{O}$} as
\[\lang{\obs{O}}:=\set{W\in \Sigma^{\ast}}{\forall p, \,\isnilp{\obs{O}\bar{W}_{p}}}\]
\end{definition}

\subsection{Balanced Flows and Logarithmic Space}\label{sec_logspace}

In previous work~\cite{Aubert2014b}, we investigated the semiring of \emph{balanced wirings}, that are defined as sets of balanced \incise{or \enquote{height-preserving}} flows.

\begin{definition}[balance]%
\label{def-balanced}
A flow $f=\term{t}\flow \term{u}$ is \emph{balanced} if for any variable $x\in\Var(\term{t})\cup\Var(\term{u})$,
all occurrences of $x$ in both \(\term{t}\) and \(\term{u}\) %
have the same height (recall notations p.~\pageref{def-height}).
A \emph{balanced} wiring $F$ is a sum of balanced flows.

We write $\balanced$ for the set of balanced wirings.%
\end{definition}

\begin{definition}[balanced observation]
A \emph{balanced observation} is an element of
$\simpleoring{\balanced\p\balanced}$.
\end{definition}

This natural restriction was shown to characterize logarithmic space
computation~\cite[Theorems 34-35]{Aubert2014b}.

\begin{theorem}[balance and logarithmic space]
If $O$ is a balanced observation, then $\lang{\obs{O}}\in\coNLogspace$.
If $L\in\coNLogspace$ then there exists a balanced observation such that $\lang{\obs{O}}=L$.
\end{theorem}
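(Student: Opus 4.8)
The plan is to prove the two inclusions separately, using the standard facts that $\NLogspace=\coNLogspace$ \incise{Immerman--Szelepcs\'enyi} and that $\NLogspace$ is exactly the class of languages recognized by two-way multi-head nondeterministic finite automata.

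For soundness \incise{a balanced observation accepts a language in $\coNLogspace$}, I would show that the complementary property, non-nilpotency of $\obs{O}\wordrep{W}{p}$, is decidable in $\NLogspace$. The crucial point is that balance forces every flow of $\obs{O}$ to be height-preserving (recall \autoref{def-balanced}), so that iterating the product $\obs{O}\wordrep{W}{p}$ only ever produces terms whose height is bounded by a constant depending on $\obs{O}$ alone. Consequently the terms occurring along a resolution chain \incise{call them configurations} split into a bounded-height part, contributed by the $\alphabet\p\balanced\p\balanced$ structure of the observation, together with a single position constant among the $n+1$ constants of $\wordrep{W}{p}$. Such a configuration is encodable in space $O(\log n)$, the only unbounded datum being which position constant is currently in play. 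By \autoref{def_nilp}, $\obs{O}\wordrep{W}{p}$ fails to be nilpotent exactly when there is an arbitrarily long resolution chain; since there are only polynomially many configurations, this is equivalent to the existence of a cycle in the directed graph whose vertices are configurations and whose edges are single product steps. Guessing such a cycle step by step, while storing only the current configuration, is an $\NLogspace$ procedure. Hence non-nilpotency lies in $\NLogspace$, and by \autoref{normativityppty} the answer is independent of $p$, so $\lang{\obs{O}}\in\coNLogspace$.

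For completeness, I would start from $L\in\coNLogspace=\NLogspace$ and fix a two-way multi-head nondeterministic finite automaton recognizing the complement $\overline{L}$. I would then encode this automaton as a balanced observation, following the reading of $\wordrep{W}{p}$ as dynamic movement instructions given after \autoref{word-rep}: the position constants carry the head locations, the symbols $\lft$ and $\rgt$ implement moves to the previous and next cell, and the finite control \incise{state together with the relative arrangement of the heads} is recorded in the bounded-height component living in $\alphabet\p\balanced\p\balanced$. Each transition becomes a balanced flow, and the encoding is arranged so that a resolution chain of $\obs{O}\wordrep{W}{p}$ simulates a run of the automaton, a looping run corresponding to a non-nilpotent product. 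Then nilpotency of $\obs{O}\wordrep{W}{p}$ holds iff the automaton does not accept $W$, \ie iff $W\in L$, giving $\lang{\obs{O}}=L$.

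The step I expect to be the main obstacle is establishing the configuration invariant used in the soundness direction: one must check that the height of every term stays bounded and that a product step depends on nothing beyond the bounded-height part plus one position constant. This requires a careful analysis of how the alphabet component $\alphabet$ and the two balanced components of the observation compose against the flows of $\wordrep{W}{p}$, ensuring that no composition can silently grow a term or link two distinct position constants. Once this invariant is in place, the reduction of non-nilpotency to cycle detection and the $\NLogspace$ algorithm itself are routine.
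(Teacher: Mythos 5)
Your proposal is, in outline, the right one, but note first that the paper does not prove this theorem itself: it is imported from earlier work~\cite[Theorems 34--35]{Aubert2014b}, and what you reconstruct is essentially that proof --- soundness by reducing non-nilpotency of a balanced wiring to cycle detection in a polynomial-size graph of bounded-height configurations (the same fact this paper later invokes inside the proof of \autoref{lem-Incr_h}), and completeness by encoding two-way multi-head nondeterministic finite automata as balanced observations (the same encoding that \autoref{sec_completeness} extends with a stack). One minor inaccuracy in your soundness invariant: a term arising in $(\obs{O}\wordrep{W}{p})^m$ carries not a single position constant but boundedly many of them (one per simulated head, stored inside the balanced component); this is harmless, since the bound depends only on $\obs{O}$, so configurations still fit in logarithmic space and the cycle-guessing procedure stays in \NLogspace.

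There is, however, a genuine gap in your completeness direction: you pass from an automaton recognizing $\overline{L}$ to \enquote{nilpotency holds \iff the automaton does not accept $W$} as if acceptance and existence of a looping run were the same thing. They are not, on either side. An accepting run in the usual reach-a-state sense need not loop, so the encoding must deliberately convert acceptance into a loop; and, more seriously, a nondeterministic two-way multi-head automaton can have \emph{non-accepting} runs that loop forever, and any such unintended loop makes $\obs{O}\wordrep{W}{p}$ non-nilpotent, hence wrongly rejects $W$. You must first normalize the automaton so that it loops exactly on accepting computations --- for instance by counting steps with extra heads up to the polynomial number of configurations and halting when the budget is exceeded. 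This is precisely the equivalence of conventions that \autoref{ndfa_def} delegates to \cite{Ladermann1994}, and the same delicacy is stressed in the proof of \autoref{th_pcomplete}: a loop in the observation, \emph{even one not used on the given input}, destroys nilpotency. Without this normalization step, your equivalence \enquote{$\obs{O}\wordrep{W}{p}$ nilpotent \iff $W\in L$} fails.
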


It also appears that a natural subclass of balanced wirings characterizes \DLogspace, the class of
\emph{deterministic} logarithmic space computable predicates.

\subsection{The \texorpdfstring{$\stack$}{Stack} Semiring}\label{sec_stack}

This paper deals with another restriction on flows, namely the restriction to \emph{unary flows}, \ie flows defined from unary function symbols only. The semiring of wirings composed only of unary flows is called the $\stack$ semiring, and will be shown to characterize polynomial time computation. Here we briefly give the definitions and results about this semiring that will be needed in this paper. A more complete picture can be found in the second author's Ph.D. thesis~\cite%
{Bagnol2014}.%

\begin{definition}[unary flows]
A \emph{unary flow} is a flow built using only unary function symbols
and a variable.

The semiring $\stack$ is the set of wirings of the form \hbox{$\sum_i \term{t}_i\flow \term{u}_i$}
where the $\term{t}_i\flow \term{u}_i$ are unary flows.
\end{definition}

\begin{example}
The flows $\symbf{f(f(\var x))\flow g(\var x)}$ and $\symbf{\var x\flow g(\var x)}$ are unary,
while $x\p \symbf f(x)\flow \symbf g(x)$ and $\symbf{f(c)\flow \var x}$ are not.
\end{example}

\begin{notation}[stack operations]\label{stack-op-def}
If $\tau=\extlist{\symbf g_1}{\symbf g_n}$ is a finite sequence of unary function symbols and $\term{t}$
is a term, we write $\tau(\term{t})\eqdef\symbf g_1\big(\symbf g_2(\cdots\symbf g_n(\term{t})\cdots\big)$.
We will write $\tau\sigma$ the concatenation of the sequences $\tau$ and $\sigma$.
Given two sequences $\tau$ and $\sigma$ we define the flow
$\op\tau\sigma\eqdef\, \tau(x) \flow \sigma(x)$
which we call a \emph{stack operation}.
\end{notation}

Note that, by definition, an element of the $\stack$ semiring must be a sum of stack operations.

The notion of \emph{cyclic flow} is crucial in the proof of the characterization of polynomial
time computation.
As we will see, it is complementary to the nilpotency property for
elements of $\stack$, \ie a wiring in $\stack$ will be shown to be either cyclic or nilpotent.

\begin{definition}[cyclicity]
A flow $\term{t}\flow \term{u}$ is a \emph{cycle} if $\term{t}$ and $\term{u}$ are matchable (\autoref{disjoint}).
A wiring $F$ is \emph{cyclic} if there is a $k$ such that $F^k$ contains a cycle.%
\label{def-cyclicity}

For $\sequ s=\extlist{f_1}{f_n}$ a sequence of stack operations, define:
\begin{itemize}
\item its \emph{height}
as $\h(\sequ s)\eqdef \max_i \big\{\h(f_i)\big\}$
\item its \emph{cardinality}%
\footnote{Note that the cardinality of $\sequ s$ is not necessarily equal to the length of $\sequ s$. For instance, if $\sequ s= f_{1},f_{1},f_{2}$ with $f_{1}\neq f_{2}$ then $\card(\sequ s)=2$.}
$\card(\sequ s)\eqdef \card\set{f_i}{1\leq i \leq n}$. %
\item its product $\pr(\sequ s)$ as $f_1\cdots f_n$.
\end{itemize}

We say the sequence $\sequ s$ is \emph{cyclic} if there is a sub-sequence
$\sequ s_{i,j}=f_i,\dots, f_j$ (\hbox{$1\leq i\leq j\leq n$})
such that $\pr(\sequ s_{i,j})$ is a cycle.
\end{definition}

\begin{remark}\label{rem_cycl}
A flow $f$ is a cycle \iff $f^2\neq0$.
\end{remark}

To carry on the proof evoked above that cyclicity and nilpotency are complementary notions in $\stack$, we borrow a
result from an earlier work about \GoI and complexity in the context of an algebra of Horn clauses.

\begin{lemma}[acyclic sequence \protect{\cite[lemma~5.3]{Baillot2001}}]\label{lem_baillot}
If $\sequ s$ is an acyclic sequence of stack operations, then we have
\[\h\big(\pr(\sequ s)\big)\leq \h(\sequ s)(\card(\sequ s)+1)\]
\end{lemma}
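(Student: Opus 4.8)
The plan is to show that if $\pr(\sequ s)\neq 0$ then its push- and pop-sides each have length at most $H\cdot C$, where $H=\h(\sequ s)$ and $C=\card(\sequ s)$; since $HC\le H(C+1)$ this suffices, and if $\pr(\sequ s)=0$ then $\h(\pr(\sequ s))=0$ and there is nothing to prove.

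First I would record the shape of products of stack operations. Unifying two unary terms $\sigma(x)$ and $\tau(y)$ (with disjoint variables) succeeds exactly when one of the words $\sigma,\tau$ is a prefix of the other, and a direct case analysis then shows that the product of two stack operations, when defined, is again a stack operation: $\op{\tau_1}{\sigma_1}\,\op{\tau_2}{\sigma_2}$ equals $\op{\tau_1\rho}{\sigma_2}$ or $\op{\tau_1}{\sigma_2\rho}$ according to which of $\sigma_1,\tau_2$ is longer. By induction $\pr(\sequ s)$, if nonzero, is a single stack operation $\op A B$, and by \autoref{def_height} its height is $\max\{|A|,|B|\}$, writing $|\cdot|$ for the length of a word of unary symbols. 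Moreover the map $\op\tau\sigma\mapsto\op\sigma\tau$ is an anti-automorphism of $\stack$ preserving height and cardinality and sending cycles to cycles (matchability being symmetric); applying it to $\sequ s$ reverses the order of the sequence, which is a bijection on contiguous subsequences and therefore preserves acyclicity. Hence it is enough to bound the push-side length $|A|$, the pop-side $|B|$ following by symmetry.

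Next I would read the computation of $A$ as a stack process. Composing the identities above from the right amounts to processing $f_n,f_{n-1},\dots,f_1$ in turn, each operation $\op{\tau_i}{\sigma_i}$ popping the (at most $H$) symbols $\sigma_i$ from the top of the current word and pushing the (at most $H$) symbols $\tau_i$, any pop below the initial surface being recorded in $B$ rather than affecting $A$. Tracking, for every symbol of the final word $A$, the occurrence $f_i$ that pushed it and was never undone afterwards, I would partition $A$ from bottom to top into consecutive \emph{bands}, one per surviving occurrence. Since a single occurrence pushes at most $H$ symbols, $A$ is covered by at least $\lceil|A|/H\rceil$ bands, and because a lower band was laid down at a later stage of the process its index is strictly larger; in particular the bands carry pairwise distinct indices. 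The pigeonhole then follows: if $|A|>HC$ there are more than $C$ bands, hence strictly more than $C$ distinct indices, while only $C$ distinct flows occur in $\sequ s$; so two surviving bands are pushed by one and the same flow $f=\op\tau\sigma$, at indices $a<b$. Turning this repetition into a cycle gives a contiguous subproduct inside $[a,b]$ whose push- and pop-sides are comparable, \ie a cycle, contradicting acyclicity (\autoref{def-cyclicity}); therefore $|A|\le HC$, and likewise $|B|\le HC$, so $\h(\pr(\sequ s))=\max\{|A|,|B|\}\le HC\le H(C+1)$.

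I would expect this last extraction step to be the main obstacle: a repeated flow does not in general make the whole window $[a,b]$ a cycle, so one must locate the cyclic factor as a possibly proper contiguous sub-window. The delicate point is to maintain a precise invariant on the top of the stack between the two occurrences — that it never descends below the lower band, so the intervening sub-sequence acts as the identity on everything beneath it, and that the repeated operation $f_a=f_b$ recreates a top configuration comparable to the one it expects — which is exactly what forces some sub-window to have matchable endpoints.
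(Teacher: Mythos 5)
The paper never proves this lemma: it is imported wholesale from Baillot--Pedicini \cite[Lemma~5.3]{Baillot2001}, so there is no internal proof to compare yours against, and your argument has to stand on its own. Its preliminary reductions do stand: the product of two stack operations, when nonzero, is again a stack operation, so $\pr(\sequ s)=\op{A}{B}$ with $\h(\pr(\sequ s))=\max\{|A|,|B|\}$; the transpose $\op\tau\sigma\mapsto\op\sigma\tau$ is an anti-automorphism preserving height, cardinality and cycles, so bounding the push side $|A|$ suffices; the decomposition of $A$ into contiguous bands contributed by pairwise distinct surviving occurrences, each of length at most $\h(\sequ s)$, is sound (stack discipline makes the surviving part of each push a contiguous inner segment, nested by processing order); and the pigeonhole correctly yields two surviving occurrences $a<b$ of a single flow $f_a=f_b=\op\tau\sigma$ whenever $|A|>\h(\sequ s)\card(\sequ s)$.

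The gap is the last step, which you flag yourself as \enquote{the main obstacle}: you assert that this repetition forces some contiguous sub-window to have a cyclic product, but you never exhibit the window nor verify matchability, and a proof that ends by describing \enquote{the delicate point} has not discharged it. The step does close, essentially along the lines you sketch. Since band $b$ survives to the very end, there is a nonempty inner segment $\tau''$ of the push $\tau$ of $f_b$ that none of $f_{b-1},\dots,f_{a+1}$, nor the pop of $f_a$, ever reaches. Take the window $f_{a+1},\dots,f_b$ \emph{excluding} $f_a$. Its product is nonzero (a contiguous factor of $\pr(\sequ s)\neq 0$); its pop side is exactly $\sigma$, the pop of $f_b$, because the intermediate steps act strictly above $\tau''$ \demiincise{this is precisely your \enquote{identity on everything beneath} invariant}\ and hence add no pop requirement below the window's initial surface; and its push side is $U\tau''$, where $U$ is whatever the intermediate steps leave above $\tau''$. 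Now the pop of $f_a$ equals $\sigma$ (this is the only place where $f_a=f_b$ is used) and it, too, acts strictly above $\tau''$, so $\sigma$ is a prefix of $U$. Thus the window's pop side $\sigma$ is a prefix of its push side $U\tau''$, the two terms $U\tau''(x)$ and $\sigma(y)$ are matchable, and the window is a cycle, contradicting acyclicity. With this inserted, your argument is complete and even gives the slightly sharper bound $\h(\sequ s)\card(\sequ s)$; without it, the crux of the lemma is asserted rather than proven.
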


The following property says that cycles in $\stack$ can be iterated indefinitely, \ie a stack operation
$\op\tau\sigma$ such that $(\op\tau\sigma)^2\neq0$ is never nilpotent.

\begin{proposition}\label{cor_cycl}
If a stack operation $f$ is a cycle, then $f^n\neq 0$ for all $n$.
\end{proposition}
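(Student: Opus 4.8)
The plan is to reduce the statement to an explicit computation of the powers $f^n$, exploiting the very rigid shape of unification between unary terms. Write $f=\op\tau\sigma=\tau(x)\flow\sigma(x)$. The first thing I would record is a description of when two unary terms unify: for sequences of unary symbols $\sigma,\tau$ and distinct variables $x,y$, the terms $\sigma(x)$ and $\tau(y)$ are unifiable if and only if one of $\sigma,\tau$ is a prefix of the other, in which case the MGU maps the variable sitting under the shorter sequence to the remaining suffix applied to the other variable. This is immediate by induction on the smaller of the two lengths, peeling off one function symbol at a time: the leading symbols must coincide or unification fails, and once one sequence is exhausted the remaining term is substituted for the surviving variable.

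By \autoref{rem_cycl}, $f$ is a cycle exactly when $f^2\neq 0$, that is, when the right-hand term $\sigma(x)$ of $f$ unifies with the left-hand term $\tau(y)$ of a fresh renaming of $f$. By the observation above this happens precisely when $\sigma$ and $\tau$ are prefix-comparable, so there are two cases: $\sigma$ is a prefix of $\tau$, or $\tau$ is a prefix of $\sigma$. I would treat the former in detail, the latter being entirely analogous, and write $\tau=\sigma\rho$ for some (possibly empty) sequence $\rho$.

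The core step is then an induction showing $f^n=\sigma\rho^n(x)\flow\sigma(x)$ for every $n\geq 1$, where $\rho^n$ denotes the $n$-fold concatenation of $\rho$. The base case $n=1$ is the definition of $f$. For the inductive step I compute $f^{n+1}=f^n f$ by unifying the right-hand term $\sigma(x)$ of $f^n$ with the left-hand term $\tau(y)=\sigma\rho(y)$ of a fresh renaming of $f$: these always unify, with MGU $x\mapsto\rho(y)$, and applying it to the outer terms yields $\sigma\rho^{n+1}(y)\flow\sigma(y)$, which is $f^{n+1}$ up to renaming. Since the product is defined at every stage, $f^n\neq 0$ for all $n$, which is the claim. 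The degenerate case where $\rho$ is empty, i.e. $\tau=\sigma$, simply makes $f$ idempotent, so $f^n=f\neq 0$.

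I do not expect a serious obstacle here: the entire content lies in the prefix characterization of unary unification, which forces the powers of $f$ to \emph{pump} the suffix $\rho$ indefinitely rather than ever collapsing to $0$ (this is exactly what fails for general flows, and is the whole point of the restriction to $\stack$). The only points that demand a little care are the bookkeeping across the two symmetric cases, where the second produces the mirror-image formula $f^n=\tau(x)\flow\tau\rho^n(x)$, and the routine verification that the freshly renamed variables remain disjoint at each composition so that every intermediate product is genuinely defined.
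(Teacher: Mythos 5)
Your proof is correct. The paper itself states this proposition without proof (deferring to the cited thesis of the second author), so there is nothing to compare against line by line; your argument --- reducing cyclicity via \autoref{rem_cycl} to prefix-comparability of $\sigma$ and $\tau$, then pumping the suffix $\rho$ by induction to get the explicit form $f^n=\sigma\rho^n(x)\flow\sigma(x)$ (or its mirror image) --- is exactly the expected one, and it correctly isolates the prefix characterization of unary unification as the point where general flows (cf.\ the paper's remark following the proposition) fail.
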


\begin{remark}
This does not hold for general flows. For instance, $f=\symbf{\var x\p c \flow d \p \var x}$ is a cycle because $f^2=\symbf{c\p c\flow d\p d}\neq 0$ (by \autoref{rem_cycl}), but $f^3=(\symbf{\var x\p c \flow d \p \var x})(\symbf{c\p c\flow d\p d})=0$. %
\end{remark}

\begin{theorem}[nilpotency in $\stack$]
A wiring $F\in\stack$ is nilpotent \iff it is acyclic.
\end{theorem}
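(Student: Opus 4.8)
The plan is to prove the two implications separately, relying on \autoref{cor_cycl} for one direction and on \autoref{lem_baillot} for the other.

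First I would dispatch the direction \enquote{cyclic $\Rightarrow$ not nilpotent}, which is the easier one. Suppose $F$ is cyclic, so that $F^k$ contains a cycle $f$ for some $k$; here $f$ is a single stack operation and, by \autoref{rem_cycl}, $f^2\neq 0$. \autoref{cor_cycl} then gives $f^n\neq 0$ for all $n$. Since $f$ is one of the flows composing the wiring $F^k$, each power $f^n$ is a nonzero product of $kn$ flows of $F$, hence $f^n\in F^{kn}$ and $F^{kn}\neq 0$. As this holds for arbitrarily large $n$, no power of $F$ vanishes, so $F$ is not nilpotent. Taking the contrapositive yields \enquote{nilpotent $\Rightarrow$ acyclic}.

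For the converse, \enquote{acyclic $\Rightarrow$ nilpotent}, I would first observe that $F$ is acyclic exactly when every sequence $\sequ{s}$ of flows drawn from $F$ is acyclic in the sense of \autoref{def-cyclicity}: a cyclic contiguous subsequence of such an $\sequ{s}$ would have its product lying in some power $F^j$ and being a cycle, contradicting acyclicity of $F$, and conversely a cycle in $F^k$ is the product of a cyclic sequence. Consequently \autoref{lem_baillot} applies to every such sequence, and since $\h(\sequ{s})\leq\h(F)$ and $\card(\sequ{s})\leq\card(F)$, every product $\pr(\sequ{s})$ --- i.e. every nonzero element of every power $F^n$ --- has height bounded by $B\eqdef\h(F)(\card(F)+1)$. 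As composition of unary flows introduces no new function symbols, all these products are stack operations $\op{\alpha}{\beta}$ with $\alpha,\beta$ of length at most $B$ over the finite set of unary symbols occurring in $F$; there are therefore only finitely many, say $M$, of them.

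It then remains to convert this finiteness into nilpotency by a pigeonhole argument, which I expect to be the main obstacle. Suppose toward a contradiction that $F^{M+1}\neq 0$ and pick a nonzero product $f_{i_1}\cdots f_{i_{M+1}}$; all its partial products $p_k=f_{i_1}\cdots f_{i_k}$ (for $1\leq k\leq M+1$) are nonzero and lie in the finite set of size $M$, so $p_a=p_b$ (up to renaming) for some $a<b$. Writing $q=f_{i_{a+1}}\cdots f_{i_b}$ we get $p_a q=p_a\neq 0$, and the crux is to show that this equation forces $q$ to be a cycle. A direct inspection of how two unary flows compose --- unification of the two single-variable stacks amounts to one being a prefix of the other --- shows that $p_a q=p_a$ can only hold when $q$ is of the form $\op{\gamma}{\gamma}$, so that $q^2\neq 0$ and $q$ is indeed a cycle. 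This makes the sequence $f_{i_1},\dots,f_{i_{M+1}}$ cyclic, contradicting the acyclicity of $F$; hence $F^{M+1}=0$ and $F$ is nilpotent. The delicate point throughout is this last structural fact about unary composition, which is exactly where the restriction to unary flows is used.
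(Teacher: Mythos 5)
Your proof is correct, and its skeleton is the same as the paper's: the direction \enquote{cyclic $\Rightarrow$ not nilpotent} via \autoref{cor_cycl}, and the converse via the height bound of \autoref{lem_baillot} combined with a pigeonhole argument on the partial products of a sequence of stack operations drawn from $F$. The one genuine divergence is at the step you single out as the crux. You establish that $p_a q = p_a \neq 0$ forces $q$ to be a cycle by a structural inspection of unary composition, concluding that $q$ must have the shape $\op{\gamma}{\gamma}$ \demiincise{a claim which is indeed true, by the two-case prefix analysis you sketch (either the body of $p_a$ is a prefix of the head of $q$ or conversely, and in both cases equality of the product with $p_a$ forces head and body of $q$ to coincide)}. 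The paper instead dispatches this in one purely algebraic line: multiplying the equation once more by $q$ gives $p_a q^2 = (p_a q)q = p_a q = p_a \neq 0$, hence $q^2 \neq 0$, \ie $q$ is a cycle by \autoref{rem_cycl}. The paper's trick uses nothing but associativity and the absorbing property of $0$, so it is valid for arbitrary flows, not just unary ones; this buys a cleaner localization of where unarity really matters, namely \autoref{lem_baillot} (and \autoref{cor_cycl} for the other direction, which fails for general flows, as the remark following it shows). Your closing sentence is therefore slightly off: the final step is not \enquote{exactly where the restriction to unary flows is used} \demiincise{unarity can be eliminated from that step entirely; the place where it cannot be eliminated is the height bound on products of acyclic sequences}.
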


\begin{proof}
Suppose $F$ is not nilpotent, so that there is at least one stack operation in $F^n$ for any $n$, and let $S$ be the number of different function symbols appearing in $F$.
Set $k\eqdef (S^{\h(F)(\card(F)+1)}+S^{\h(F)(\card(F)+1)-1}+\cdots+1)^2$, \ie the total number of different flows of height at most $\h(F)(\card(F)+1)$ using the symbols appearing in~$F$.

Let $f\neq 0$ be an element of $F^{k+1}$. It is the product $\pr(\sequ s)$ of a sequence $\sequ s=\extlist{f_1}{f_{k+1}}$ of stack operations that belong to $F$. We show by contradiction that this sequence must be cyclic, so let us suppose it is not.
By \autoref{lem_baillot}, we know that for any $i>0$, setting $\sequ s_i\eqdef\extlist{f_1}{f_i}$ we have
\[\h\big(\pr(\sequ s_i)\big)\leq\h(\sequ s_i)(\card(\sequ s_i)+1)\leq\h(F)(\card(F)+1)\]
Therefore, for any $i>0$ the flow $\pr(\sequ s_i)$ is of height at most $\h(F)(\card(F)+1)$ and uses only symbols appearing in $F$, \ie it wanders in a set of cardinal $k$, so there must be $1\leq i<j\leq k+1$ such that $\pr(\sequ s_i)=\pr(\sequ s_j)$.

Now, setting $\sequ s_{i+1,j}\eqdef\extlist{f_{i+1}}{f_j}$, we have that $\pr(\sequ s_i)\pr(\sequ s_{i+1,j})=\pr(\sequ s_j)=\pr(\sequ s_i)$ hence $\pr(\sequ s_i)\pr(\sequ s_{i+1,j})^2=\pr(\sequ s_i)\neq 0$ and thus $\pr(\sequ s_{i+1,j})^2\neq 0$ \ie $\pr(\sequ s_{i+1,j})$ is a cycle.
As $\pr(\sequ s_{i+1,j})\in F^{j-i}$ we can conclude that $F$ is cyclic.

The converse is an immediate consequence of \autoref{cor_cycl}.
\end{proof}

\begin{example}\label{ex_count}
Consider the wiring
\[\begin{array}{lll}
F\eqdef &   & \symbf f_1(x) \flow \symbf f_0(x) \\
        & + & \symbf f_0(\symbf f_1(x)) \flow \symbf f_1(\symbf f_0(x)) \\
        & + & \symbf f_0(\symbf f_0(\symbf f_1(x))) \flow \symbf f_1(\symbf  f_1(\symbf f_0(x))) \\
        & + & \symbf f_0(\symbf f_0(\symbf f_0(x))) \flow \symbf f_1(\symbf f_1(\symbf f_1(x)))
\end{array}
\]
which implements a sort of counter from $0$ to $7$ in binary notation
that resets to $0$ when it reaches $8$
(we see the sequence $\symbf f_x\symbf f_y\symbf f_z$ as the integer $x+2y+4z$).
It is clear with this intuition in mind that this wiring is cyclic.
Indeed, an easy computation shows that
$\symbf f_0(\symbf f_0(\symbf f_0(x))) \flow \symbf f_0(\symbf f_0(\symbf f_0(x)))\in F^8$.

If we lift this example to the case of a counter from $0$ to $2^n-1$ that resets to $0$
when it reaches $2^n$, we obtain an example of a wiring $F$ of cardinal $n$
and height $n-1$ such that $F^{2^n}$ contains a cycle, but $F^{2^n-1}$ does not.
This shows that the number of iterations needed to find a cycle may be exponential in the height
and the cardinal of $F$, which rules out a polynomial time decision procedure for the nilpotency
problem that would simply compute the iterations of $F$ until it finds a cycle in it.
\end{example}

Finally, let us define a new class of observations, based on the $\stack$ semiring.

\begin{definition}%
\label{def_stobs}
A \emph{balanced observation with stack} is an element of
$\stackobs\eqdef\simpleoring{\stack\p\balanced}$.
\end{definition}

\sepsection
\section{Pushdown Automata and \Ptime Completeness}
\subsection{Characterization of \Ptime by Pushdown Automata}\label{sec_cook}
The class of deterministic polynomial time computable predicates \Ptime is the most studied complexity class,
mainly because it supposedly contains all \enquote{tractable} problems.

Extending our approach to this class was a long-standing goal,
whose completeness part is attained thanks to the connection with pushdown automata.
In this subsection, we recall their definition, the \Ptime characterization theorem we will rely on and the memoization technique.

\subsubsection{Definition and classical results}
\label{def-automata}
Automata form a very basic model of computation that can be extended in different ways.
For instance, allowing multiple heads that can move in two directions on the input tape, one gets a model of computation equivalent to read-only Turing machines.

Among possible extensions, our interest will focus on the addition of a \enquote{pushdown stack} (together
with multiple heads), which we referred to as \enquote{pushdown automata} until now. We will see that this
leads to a characterization of \Ptime.

Let us give below the most general definition, for the non-deterministic case.

\begin{definition}[\sansparamboldmnfas]%
\label{ndfa_def}
For \(k \geqslant 1\), a \emph{\(2\)-way \(k\)-head
finite automaton with pushdown stack (\mnfas{k})} is a tuple \(\automate{M} = \{ \state{S}, \state{i}, A, B, %
 \tridro, \trigau, \bstack, \sigma \}\) where:
\begin{itemize}
\item \(\state{S}\) is the finite set of \emph{states}, with \(\state{i} \in \state{S}\) the \emph{initial state}%
;
\item \(A\) is the \emph{input alphabet}, \(B\) the \emph{stack alphabet};
\item \(\tridro\) and \(\trigau\) are the \emph{left} and \emph{right endmarkers}, \(\tridro, \trigau \notin A\);
\item \(\bstack\) is the \emph{bottom symbol of the stack}, \(\bstack \notin B\);
\item \(\sigma\) is the \emph{transition relation}, \ie a subset of the product \((\state{S} \times (A_{\bowtie}\})^k \times B_{\bstack}) \times (\state{S} \times \{-1, 0, +1\}^k \times \{\pop, %
\push(b)\}) \) %
where \(A_{\bowtie}\) (\resp \(B_{\bstack}\)) denotes \(A \cup \{ \tridro, \trigau\}\) (\resp \(B \cup \{ \bstack\}\)).
The instruction \(-1\) corresponds to moving the head one cell to the left, \(0\) corresponds to keeping the head on the current cell and \(+1\) corresponds to moving it one cell to the right.
Regarding the pushdown stack, the instruction \(\pop\) means \enquote{erase the top symbol}, %
while, for all \(b \in B\), \(\push(b)\) means \enquote{write \(b\) on top of the stack}.
\end{itemize}
\end{definition}

The automaton \emph{rejects the input} if it loops, otherwise it \emph{accepts}.
This condition is equivalent to the standard way of defining acceptance and rejection by \enquote{reaching a special state}~\cite[Theorem~2, p.~125]{Ladermann1994}.
Modulo another standard transformation, we restrict the transition relation so that at most one head moves at each transition.

Without pushdown stacks, \(2\)-way \(k\)-head finite automata
characterize \(\Logspace\)
and \(\NLogspace\), depending on the automata being deterministic or not.

This result, used in our previous work~\cite{Aubert2014b,Aubert2015temp}, was first stated informally by Juris Hartmanis~\cite[pp.~338--339]{Hartmanis1972}
and is often \cite[p.~13]{Cook1971}, \cite[pp.~338--339]{Hartmanis1972}, attributed to Alan Cobham. However, a detailed proof can be found in a classical
handbook~\cite[pp.~223--225]{Wagner1986}.
The addition of a pushdown stack improves the expressivity of the machine model, as stated in the following theorem.

\begin{theorem}\label{th_autom-P}
\sansparammnfas characterize \Ptime.
\end{theorem}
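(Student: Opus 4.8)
The plan is to establish the two inclusions separately, exploiting the fact that a \mnfas{k} is nothing but a logarithmic-space machine \incise{the \(k\) two-way heads} equipped with a pushdown stack. For the completeness part \incise{every \Ptime language is some \(\lang{\automate M}\)}, I would build on the classical correspondence recalled above: without a stack, \(2\)-way \(k\)-head automata characterize \Logspace and \NLogspace \cite{Hartmanis1972,Wagner1986}, each head acting as a logarithmic counter over the input positions. Adding a pushdown stack therefore yields exactly an auxiliary pushdown automaton running in logarithmic space, and Cook's theorem \cite{Cook1971a} asserts that such machines decide precisely the class \(\mathrm{DTIME}(2^{O(\log n)})\), that is, \Ptime; concretely, the stack drives the recursive evaluation of the reachability of the accepting configuration of a fixed \Ptime Turing machine, whose configurations are addressed using the heads. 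Nondeterminism of the transition relation \(\sigma\) costs nothing here, since the fixpoint computation used for soundness can be carried out deterministically.

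For the soundness part \incise{every \(\lang{\automate M}\) lies in \Ptime}, the essential tool is memoization. I would first single out the \emph{surface configurations}, recording only the state, the \(k\) head positions and the top-of-stack symbol: on an input of length \(n\) there are at most \(|\state{S}|\,(n+2)^{k}\,|B_{\bstack}|\) of them, a polynomial in \(n\). Since the stack is unbounded one cannot enumerate full configurations, so instead I would compute by saturation the binary \emph{terminator} relation on surface configurations, relating \(c\) to \(c'\) whenever the automaton, started in \(c\) immediately after a \push, eventually executes the matching \pop and resurfaces in \(c'\) at the same stack height. This relation is the least fixpoint obtained from the one-step transitions by repeatedly composing a \push, an already-computed terminator, and the matching \pop; each round is polynomial and the relation lives in a set of polynomial size, so the fixpoint is reached in polynomially many rounds. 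Given the terminator relation, acceptance \incise{by the stated convention, whether \(\automate M\) can halt rather than loop forever} reduces to a reachability-and-cycle-detection problem on the induced finite graph of surface configurations, which is solvable in polynomial time.

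The main obstacle is the soundness direction, and within it the correctness together with the polynomial complexity of the saturation computing the terminator relation. One must verify that composing a \push step, an already-known terminator and the matching \pop really captures every way the computation can return to a given stack level \incise{so that no accepting or looping run is overlooked}, and that runs whose stack grows without bound \incise{net pushes along a cycle} are correctly classified. This combinatorial core is exactly what the later \emph{saturation technique} (\autoref{saturation-section}) transposes to the \(\stack\) semiring; here it is what makes the whole argument, and in particular the polynomial bound, go through.
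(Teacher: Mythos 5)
Your proposal is correct and takes essentially the same route as the paper, which likewise does not reprove this classical theorem but reviews exactly the two ingredients you invoke: for completeness, Cook's simulation of a polynomial-time Turing machine by a stack automaton with logarithmically bounded auxiliary memory (the heads), and for soundness, memoization via the polynomially many surface configurations and the transitive closure of the \enquote{terminator} relation between them. The only point to watch is the paper's acceptance convention (reject = some run loops, \ie a co-nondeterministic reading, for which the paper cites an equivalence with acceptance by final state), but your cycle-detection argument handles either reading and both yield \Ptime.
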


Without reproving this classical result of complexity theory, we review the main ideas that support it.

\textbf{Simulating a polynomial-time Turing machine with a \sansparamboldmnfas} amounts to designing an equivalent Turing machine whose movements of heads follow a regular pattern.
That permits to seamlessly simulate their contents with a pushdown stack.
A complete proof~\cite[pp.~9--11]{Cook1971} as well as a precise algorithm~\cite[pp.~238--240]{Wagner1986}
can be found in the literature.

\textbf{Simulating a \sansparamboldmnfas with a polynomial-time Turing machine} cannot amount to simply simulate step-by-step the automaton with the Turing machine.
The reason is that for any automaton, one can design an automaton that recognizes the same language but runs exponentially slower \cite[p.~197]{Aho1968}.
That the automaton can accept its input after an exponential computation
time is similar with the situation of the counter in \autoref{ex_count}.

The technique invented by Alfred V.~Aho \etal~\cite{Aho1968} and made popular by Stephen A.~Cook consists in building a \enquote{memoization table} that allows the Turing machine to create shortcuts in the simulation of the automaton, decreasing drastically its computation time.
In some cases, an automaton with an exponentially long run can even be simulated in linear time~\cite{Cook1971a}.

We give more details on this technique in the next subsection, as its adaptation to our context will be a key
ingredient in the soundness proof in \autoref{sec-nilp-of-stack}.

\subsubsection{The memoization technique}
Although the name comes from machine-learning~\cite{Michie1968}, this technique is usually attributed to S.~A.~Cook and has provided fundamental as well as practical results.
In the specific case of automata with stack, it can be condensed in the following remark:
if at a given time you are in state $\state{q}$ with \(b\) on top of a stack of height \(h \geqslant 1\), and if you end up later on in the state $\state{q}'$ with some symbol \(b'\) on top of a stack of height \(h\), \emph{without having popped a symbol at height inferior to \(h\)}, and if you are about to pop this symbol, then you can save this progression
$(\state q,b)\rightarrow(\state q',b')$.
If later on you find yourself in the same state \state{q}, with \(b\) on top of your stack and with the heads in the same positions, you can directly skip to the saved progression, as there is no need to perform this part of the computation again.
This \enquote{partial information}, the description of your automaton without the contents of the stack, apart from its top symbol, is sometimes called \enquote{surface configuration} or \enquote{partial identifier}.

The memoization technique consists in building and using the transitive closure of the relation between surface configuration.
Differently expressed, memoization is a \enquote{\emph{clever evaluation strategy}, applicable whenever the results of certain computations are needed more than once}~\cite[p.~348]{Amtoft1992}.
One looking for subtle refinements could look for a technique of memoization computed independently from the input, allowing to \enquote{compile} a stack program into equivalent online memoizing program~\cite{Andersen1994} that runs exponentially faster.
A nice explanation in the case of single head automata can be found in a recent and short article by R. Glück~\cite{Gluck2013}.

We will be adapting this idea to our context in \autoref{saturation-section}, which will
amount to a form of \emph{exponentiation by squaring}.

\subsection{Encoding \texorpdfstring{\sansparammnfas}{2MFA+S} as Observations: \texorpdfstring{$\Ptime$}{Ptime} Completeness}\label{sec_completeness}
The encoding proposed below is similar to the previously developed~\cite[Sect.~4.1]{Aubert2014b} encoding of \(2\)-way \(k\)-head finite automata (without pushdown stack) by flows.
The only difference is the addition of a \enquote{plug-in} that allows for a representation of stacks in observations.

Remember that acceptance by observations is phrased in terms of nilpotency of the
product $O\wordrep W p$ of the observation and the representation of the input (\autoref{lang-obs}).
Hence the computation in this model is defined as an iteration: one computes by considering the sequence $O\wordrep W p, (O\wordrep W p)^{2}, (O\wordrep W p)^{3}, \dots$ and the computation either ends at some point (\ie accepts) \incise{that is $(O\wordrep W p)^{n}=0$ for some integer $n$} or loops (\ie rejects).
One can think of this iteration as representing
a dialogue, or a game, between the observation and its input.

We turn now to the proof of \emph{\Ptime-completeness} for the set of balanced observations with stacks.

\begin{theorem}\label{th_pcomplete}
If \({L} \in\Ptime\), then there exists a balanced observation with stack \(O\in\stackobs\)
such that \({L} = \lang O\).
\end{theorem}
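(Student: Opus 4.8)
The plan is to exploit the automata-theoretic characterization of \Ptime and encode the relevant machine directly as an observation with stack. Since $L\in\Ptime$, \autoref{th_autom-P} provides a \mnfas{k} $\automate{M}$ with $\lang{\automate{M}}=L$; by the standard normalizations recorded after \autoref{ndfa_def} I may assume that $\automate{M}$ accepts exactly when every computation halts and that at most one head moves per transition. The goal is then to build $O\in\stackobs$ such that, for every representation $\wordrep{W}{p}$ of the input, the iteration of $O\wordrep{W}{p}$ faithfully simulates the run of $\automate{M}$ on $W$, so that $\isnilp{O\wordrep{W}{p}}$ holds precisely when $\automate{M}$ accepts $W$.

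For the construction I reuse the encoding of $2$-way $k$-head automata by flows of \cite[Sect.~4.1]{Aubert2014b} and add a stack plug-in, following the tensor structure $\alphabet\p\stack\p\balanced$ underlying $\stackobs=\simpleoring{\stack\p\balanced}$ (\autoref{def_stobs}). A configuration of $\automate{M}$ is represented by a flow whose $\alphabet$ component carries the letter scanned by the active head together with a direction in $\LR$, whose $\stack$ component carries the stack content as a stack operation $\op\tau\sigma$ (\autoref{stack-op-def}), and whose $\balanced$ component carries the current state and the bookkeeping for the other heads. Each transition of $\automate{M}$ is translated into a flow of $O$: head moves and letter reads are performed by unification against $\wordrep{W}{p}$, so that reaching an adjacent cell reveals its letter exactly as the flows $\wordterm{c_i}{\rgt}{p_i}\sflow\wordterm{c_{i+1}}{\lft}{p_{i+1}}$ of \autoref{word-rep} prescribe, the endmarkers being supplied by the symbol $\star$ that closes $\wordrep{W}{p}$; a \push instruction adds a unary symbol $\symbf{b}$ on top of the $\stack$ component and a \pop removes the top one, while the transition acts on the state stored in the $\balanced$ component. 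By construction $O$ mentions no position constant, its stack component uses only unary function symbols and its state component only balanced flows, so $O\in(\alphabet\p\stack\p\balanced)\cap\unused{P}=\stackobs$.

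It then remains to check that the simulation is faithful. Because addition of wirings is union, no monomial is ever cancelled, and the nonzero monomials occurring in $(O\wordrep{W}{p})^m$ are exactly the configurations of $\automate{M}$ reachable from the initial one in $m$ steps, the imported position constants recording the head positions and the $\stack$ component recording the stack. If $\automate{M}$ accepts $W$ then every branch of its finitely branching computation tree halts, so by König's lemma the tree is finite and $(O\wordrep{W}{p})^{n}=0$ for some $n$, \ie $O\wordrep{W}{p}$ is nilpotent; if instead $\automate{M}$ loops on some branch, that branch keeps a live monomial in every power and $O\wordrep{W}{p}$ is not nilpotent \incise{and when the looping branch is eventually periodic this persistence is witnessed by an honest cycle in the sense of \autoref{def-cyclicity}}. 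Hence $\isnilp{O\wordrep{W}{p}}$ holds iff $\automate{M}$ accepts $W$, and by normativity (\autoref{normativityppty}) this is independent of $p$, so $\lang{O}=\lang{\automate{M}}=L$ by \autoref{lang-obs}.

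The main obstacle is the stack plug-in itself: one has to verify that under the product of flows the $\stack$ component genuinely behaves as a last-in-first-out store \incise{a \push and a \pop composing the way the stack operations $\op\tau\sigma$ multiply} and that the resulting step-by-step simulation stays within $\stack\p\balanced$ instead of escaping to unrestricted flows. The remaining delicate point is the faithfulness bijection of the previous paragraph, namely that multiplying by $O$ sends the monomial of a configuration to exactly the monomials of its one-step successors, so that reachability in $\automate{M}$ and non-vanishing of powers of $O\wordrep{W}{p}$ coincide.
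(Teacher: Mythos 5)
Your construction follows the same route as the paper's proof: obtain a \mnfas{k} $\automate{M}$ for $L$ from \autoref{th_autom-P}, encode each transition as a flow in $(\alphabet\p\stack\p\balanced)\cap\unused{P}$ with stack symbols as unary function symbols, states as constants, and head positions handled through the interaction with $\wordrep{W}{p}$, then appeal to \autoref{normativityppty}. Up to that point the two proofs agree.

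The gap is in your correctness analysis, and it is exactly the point the paper's proof singles out as delicate. You claim that the nonzero monomials of $(O\wordrep{W}{p})^m$ are \emph{exactly} the configurations of $\automate{M}$ reachable from the initial one in $m$ steps, and you later restate this as ``reachability in $\automate{M}$ and non-vanishing of powers of $O\wordrep{W}{p}$ coincide.'' This is false. A product $f_m\cdots f_1$ of flows is nonzero as soon as the chain of rewriting rules is \emph{composable}, \ie applicable to \emph{some} term; nothing in the algebra singles out an initial configuration (indeed the monomials are flows, rules with variables, not configurations). Nilpotency is therefore a global property of the entire configuration graph, reachable or not: it fails as soon as the encoding admits \emph{any} cycle, including one among configurations that the run of $\automate{M}$ on $W$ never visits (\eg a loop that would only be used on other inputs). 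Consequently your König's-lemma step --- $\automate{M}$ accepts $W$, hence the computation tree is finite, hence $(O\wordrep{W}{p})^{n}=0$ --- does not go through: acceptance bounds the chains issued from the initial configuration, while nilpotency requires a bound on \emph{all} composable chains. With a naive transition-by-transition encoding you only get the inclusion $\lang{O}\subseteq L$; the inclusion $L\subseteq\lang{O}$ is not established. The paper closes this gap in its ``acceptance and rejection'' paragraph: rejection loops must be encoded as a \emph{re-initialization} of the computation (following the earlier operator-algebra work it cites), so that any cycle in the wiring passes through the initial configuration and ``no cycle anywhere'' becomes equivalent to ``the run of $\automate{M}$ on $W$ terminates.'' Your proposal is missing this device or any substitute for it.
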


\newtoggle{decription} %
\togglefalse{decription} %
\iftoggle{decription}{
\newcommand{\marqueur}[1]{\item[#1]}
}{
\newcommand{\marqueur}[1]{\noindent\textbf{#1}}
}
\iftoggle{decription}{\begin{manual_qed_proof}}{\begin{proof}}

The proof relies on encoding a \mnfas{k} \(\automate{M}\) that recognizes \(\mathcal{L}\) \incise{whose existence is ensured
by \autoref{th_autom-P}} as an observation of \(\stackobs\).
Taking \(A = \Sigma\) the input alphabet, \(k +1 \) the number of heads of the automaton,
we will encode the transition relation of $M$ as a
balanced observation with stack. More precisely, the automaton will be represented as an element $O_{\automate{M}}$ of
\(\stackobs=\simpleoring{\stack\p\balanced}\) which can be written as a sum of flows of the form
\[
\begin{multlined}
\symbf c'
\p
\symbf d'
\p
\sigma(x)
\p
\symbf q'
\p
\nary k (y_1', \hdots, y_k')
\p
\Mbis (z')
\flow \\
\symbf c
\p
\symbf d
\p
\symbf s (x)
\p
\symbf q
\p
\nary k (y_1, \hdots, y_k)
\p
\Mbis (z)
\end{multlined}
\]
with
\begin{itemize}
\item \(\symbf c,\symbf c'\in\Sigma\cup\{\dummy\}\),
\item \(\symbf d, \symbf d'\in \LR\),
\item \(\sigma\) a finite sequence of unary function symbols,
\item \(\symbf s\) a unary function symbol,
\item \(\symbf{q}, \symbf{q}'\) two constant symbols,
\item  \(\nary k, \Mbis\) functions symbols of respective arity \(k\) and \(1\).
\end{itemize}

The intuition behind the encoding is that a configuration of a %
\mnfas{k+1} processing an input can
be seen as a closed term
\[
\symbf c
\p
\symbf d
\p
\tau (\bstack)
\p
\symbf q
\p
\nary k (\extlist{\symbf p_{i_1}}{\symbf p_{i_k}})
\p
\Mbis (\symbf p_{j})
\]
where the $\symbf p_i$ are position constants representing the positions of the \emph{main pointer}
($\Mbis (\symbf p_{j})$) and of the \emph{auxiliary pointers}
($\nary k (\extlist{\symbf p_{i_1}}{\symbf p_{i_k}})$); the symbol $\symbf q$ represents the state the
automaton is in; $\tau(\bstack)$ represents the current stack; the symbol $\symbf d$ represents the direction
of the next move of the main pointer; the symbol $\symbf c$ represents the symbol currently read by the
main pointer.

When a configuration matches the right side of the flow, the transition is followed, leading to an
updated configuration.

More precisely, we will be encoding $M$ as an observation $O_{\automate{M}}$, and observe %
the iterations of $O_{\automate{M}}\wordrep Wp$, its product with a word representation.
Let us explain how the basic operations of $\automate{M}$ are encoded:

\iftoggle{decription}{\begin{description}}{}

\marqueur{Moving the pointers.}
Looking back at the definition of the encoding of words (\autoref{word-rep}), we see that we can have a
new reading of what the representation of a word does: it moves the main pointer in the required direction.
From that perspective, the position holding the symbol $\star$ in \autoref{word-rep} allows to
simulate the behavior of the endmarkers \(\tridro\) and \(\trigau\).

On the other hand, the observation is not able to manipulate the position of pointers directly (remember
observations are forbidden to use the position constants) but can change the direction symbol $\symbf d$,
rearrange pointers (hence changing which one is the main pointer) and modify its state and the symbol $\symbf c$
accordingly.
For instance, a flow of the form
\[
\begin{multlined}
\cdots \p \nary k (x, \hdots, y_k) \p \Mbis (y_1)
\flow \\
\cdots \p \nary k (y_1, \hdots, y_k) \p \Mbis (x)
\end{multlined}
\]
encodes the instruction \enquote{swap the main pointer and the first auxiliary pointer}.

Note however that our model has no built-in way to remember the values of the auxiliary pointers
\incise{it remembers only their \emph{positions} as arguments of $\nary k(\cdots)$},
but this can be implemented easily using additional states.

One can see that it is the interaction between the observation $O_{\automate{M}}$
and the word representation $\wordrep W p$ that
 simulates the behavior of the automaton, and not the observation on its own manipulating some passive data.

\marqueur{Handling the stack.}
Suppose we have a unary function symbol \(\func b(\cdot)\) for each symbol \(b\) of the stack alphabet
\(B_{\bstack}\).

A transition that reads \(b\) and pops it is simply written as
\[\cdots\p x\p\cdots \flow \cdots\p \func b(x)\p\cdots\]

A transition that reads \(b\) and pushes a symbol \(c\) is written
\[\cdots\p\func c(\func b(x))\p\cdots \flow \cdots\p \func b(x) \p\cdots\]

\marqueur{Changing the state.}
We suppose that we have a constant $\symbf q$ for each state $\state q$ of $\automate{M}$. Then, updating the state
amounts to picking the right $\symbf q$ and $\symbf q'$ in the flow representing the transition.

\marqueur{Acceptance and rejection.}
The encoding of acceptance and rejection is slightly more delicate, as detailed
in a previous article~\cite[6.2.3.]{Aubert2014ctemp}.

The basic idea is that acceptance in our model is defined as nilpotency, that is to say: the absence of
loops.
If no transition in the automaton can be fired, then no flow in our encoding can be unified,
and the computation ends.

Conversely, a loop in the automaton will refrain the wiring from being nilpotent.
The point we need to be careful about is the encoding of loops: those should be represented as a re-initialization of the computation, as discussed in details in earlier work \cite{Aubert2014ctemp}.
The reason for this is that another encoding may interfere with the representation of acceptation as termination: the existence of a loop in the observation $O_{\automate{M}}$ representing the automaton $\automate{M}$, \emph{even one that is not used in the computation with the input $W$}, prevents the wiring $O_{\automate{M}}\wordrep W p$ from being nilpotent.

Indeed, the \enquote{loop} in \autoref{ndfa_def} of \sansparammnfas is to be read as
\enquote{perform forever the same computation}.
\iftoggle{decription}{\mqed \end{description}\end{manual_qed_proof}}{\end{proof}}

Notice that the encoding of pushdown automata as observations with stacks produces only specific observations, namely those that are sums of flows of a particular form (shown at the beginning of the preceding proof). This is due to the fact that one encodes the transitions directly, so that each flow corresponds to a transition step.

In particular, %
as the transition relation of automata depends only on the top of the stack, the body (\ie the right-hand part) of the flows must be of the form $\cdots\p \func b(x)\p\cdots$.
However, a general observation with stack is not constrained in this way, and allows a more compact representation of programs where one can read, pop and push several symbols of the stack simultaneously.

Nevertheless, this does not increase the expressive power: the next section is devoted to prove that the language recognized by any observation with stack lies in \Ptime.

\sepsection
\section{Nilpotency in \texorpdfstring{$\stack$}{Stack} and \Ptime soundness}\label{sec-nilp-of-stack}
\subsection{The Saturation Technique}\label{saturation-section}

We now introduce the \emph{saturation} technique, which allows to decide nilpotency of $\stack$ elements in
polynomial time.
This technique relies on the fact that under certain conditions, the height of flows does not grow when computing
their product.
It adapts memoization to our setting: we repeatedly extend the wiring by adding pairwise products of flows,
allowing for more and more \enquote{transitions} at each step.

\begin{notation}\label{not_incr}
Let $\tau$ and $\sigma$ be sequences of unary function symbols.

If $\h\big(\tau(x)\big)\geq\h\big(\sigma(x)\big)$ we say that $\op\tau\sigma$ is \emph{increasing}.

If $\h\big(\tau(x)\big)\leq\h\big(\sigma(x)\big)$ we say that $\op\tau\sigma$ is \emph{decreasing}.

A wiring in \(\stack\) is \emph{increasing} (\resp \emph{decreasing}) if it contains only increasing
(\resp \emph{decreasing}) stack operations.
\end{notation}

\begin{lemma}[stability of height]
\label{lem_grow}
Let $f=\op\tau\sigma$ and $g=\op\rho\chi$ be stack operations.
If $f$ is decreasing and $g$ is increasing, we have $\h(fg)\leq \max\{\h(f),\h(g)\}$.
\end{lemma}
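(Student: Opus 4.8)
The plan is to compute the product $fg$ explicitly, taking advantage of the fact that both flows are \emph{unary}. After renaming so that the two flows share no variable, $fg$ is defined exactly when the body $\sigma(x)$ of $f$ and the head $\rho(y)$ of $g$ unify; since each of these is a stack of unary symbols sitting on top of a single variable, this happens precisely when one of the sequences $\sigma,\rho$ is a prefix of the other. So first I would record, from the hypotheses, that $\h(f)=\h(\sigma(x))$ \incise{because $f$ decreasing gives $\h(\tau(x))\le\h(\sigma(x))$} and $\h(g)=\h(\rho(y))$ \incise{because $g$ increasing gives $\h(\chi(y))\le\h(\rho(y))$}. If $fg=0$ the bound is immediate since $\h(0)=0$, so I would assume the two terms unify and split on the prefix relation.

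Then I would treat the two cases, using that $\h(\tau(x))$ equals the length of $\tau$ and that stacking $\tau$ on a term of height $m$ yields height $\h(\tau(x))+m$. When $\rho=\sigma\rho'$, the MGU is $x\mapsto\rho'(y)$, which gives $fg=\tau(\rho'(y))\flow\chi(y)$ and hence
\[\h(fg)=\max\{\h(\tau(x))+\h(\rho(y))-\h(\sigma(x)),\ \h(\chi(y))\}.\]
Here the decreasing hypothesis bounds the first argument by $\h(\rho(y))$ and the increasing hypothesis bounds the second by $\h(\rho(y))$, so $\h(fg)\le\h(\rho(y))=\h(g)$. In the symmetric case $\sigma=\rho\sigma'$ the MGU is $y\mapsto\sigma'(x)$, giving $fg=\tau(x)\flow\chi(\sigma'(x))$, and the same two hypotheses now bound $\h(fg)$ by $\h(\sigma(x))=\h(f)$. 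The two cases overlap harmlessly when $\sigma=\rho$, and in either case $\h(fg)\le\max\{\h(f),\h(g)\}$, as desired.

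The computational heart of the argument is the observation that the part of the longer sequence left over after unification \incise{the \enquote{overhang} $\rho'$ or $\sigma'$} gets stacked on top of $\tau$ or below $\chi$, and that its length is exactly the quantity that the orientation hypothesis cancels out. I expect the only delicate point to be justifying cleanly that unification of two unary stacks is governed by the prefix relation, and writing down the correct MGU and resulting flow in each case; once $fg$ is made explicit the height bound reduces to a single inequality. Note that both orientation hypotheses are genuinely used: dropping \enquote{$f$ decreasing}, for instance, the overhang would be added to a side that is already maximal and the height of $fg$ could strictly exceed $\max\{\h(f),\h(g)\}$.
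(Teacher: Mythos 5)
Your proof is correct and takes essentially the same route as the paper's: both dispose of the case $fg=0$, reduce unifiability of the two unary stacks to the prefix relation between $\sigma$ and $\rho$, compute $fg$ explicitly in each case, and use the decreasing/increasing hypotheses to bound the leftover \enquote{overhang} so that $\h(fg)\leq\h(g)$ (resp.\ $\h(fg)\leq\h(f)$). The only cosmetic difference is that you work out both cases with explicit height arithmetic, whereas the paper argues on concatenated sequences ($\h(\chi\mu)\leq\h(\rho\mu)=\h(\sigma)\leq\h(f)$) in one case and invokes symmetry for the other.
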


\begin{proof}
If $fg=0$, the property holds because $\h(0)=0$.
Otherwise, we have either $\sigma=\rho\mu$ or $\sigma\mu=\rho$.

Suppose we are in the first case (the second being symmetric).
Then we have $fg=\op{\tau}{\chi\mu}$ and $\h(\sigma)=\h(\rho\mu)$.

As $g$ is increasing, $\h(\chi)\leq\h(\rho)$ and therefore we have $\h(\chi\mu)\leq\h(\rho\mu)=\h(\sigma)\leq \h(f)\leq\max\{\h(f),\h(g)\} $.
\end{proof}

With this lemma in mind, we can define a \emph{shortcut} operation that augments an element of $\stack$ by
adding new flows while keeping the maximal height unchanged. Iterating this operation, we obtain a \emph{saturated} version
of the initial wiring, containing shortcuts, shortcuts of shortcuts, \etc

We are designing in fact an \emph{exponentiation by squaring} procedure for elements of $\stack$, the algebraic reading of memoization.

\begin{definition}[saturation]\label{def_sat}
If $F\in\stack$ we define its increasing $\inc F\eqdef\set{f\in F}{f \text{ is increasing}}$
and decreasing $\dec F\eqdef\set{f\in F}{f \text{ is decreasing}}$ subsets.%

We set the \emph{shortcut} operation $\short(F)\eqdef F+\dec F\inc F$ and its least fixpoint, which we call the \emph{saturation} of $F$:
\[\sat(F)\eqdef \sum_{n\in\NN}\short^n(F)\]
where $\short^{n}$ denotes the $n$\textsuperscript{th} iteration of $\short$.
\end{definition}

Now, as we are only manipulating flows with a limited height, the iteration of the shortcut operation is bound to stabilize at some point.

\begin{proposition}[stability of saturation]\label{prop_sat}
Let $F\in\stack$ be a wiring and $S$ the number of distinct function symbols appearing in $F$.

For any $n$, we have $\h\big(\short^n(F)\big)=\h(F)$.

Moreover if $n\geq (S^{\h(F)}+S^{\h(F)-1}+\cdots+1)^2$ then $\short^{\,n}(F)=\sat(F)$.
\end{proposition}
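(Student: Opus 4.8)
The plan is to prove the two claims separately, both resting on \autoref{lem_grow} and on the fact that the shortcut operation is inflationary, that is $\short(G)\supseteq G$ for every $G\in\stack$ (since $\short(G)=G+\dec G\inc G$ only \emph{adds} flows).

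For the height statement I would argue by induction on $n$. The case $n=0$ is immediate as $\short^0(F)=F$. For the inductive step write $G\eqdef\short^n(F)$, so that $\short^{n+1}(F)=G+\dec G\inc G$. Every flow contributed by the shortcut is a product $fg$ with $f\in\dec G$ decreasing and $g\in\inc G$ increasing; \autoref{lem_grow} then yields $\h(fg)\leq\max\{\h(f),\h(g)\}\leq\h(G)$, hence $\h(\dec G\inc G)\leq\h(G)$ and $\h(\short^{n+1}(F))\leq\h(G)$. The reverse inequality is automatic from $\short^{n+1}(F)\supseteq G$, so $\h(\short^{n+1}(F))=\h(G)=\h(F)$ by the induction hypothesis.

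The second statement is a finiteness-plus-pigeonhole argument. Set $N\eqdef S^{\h(F)}+S^{\h(F)-1}+\cdots+1$, which counts the sequences of unary symbols of length at most $\h(F)$ built from the $S$ symbols occurring in $F$ (a sequence $\tau$ with $\h(\tau(x))\leq\h(F)$ has length at most $\h(F)$). A stack operation $\op\tau\sigma$ of height at most $\h(F)$ using only those symbols is determined by the pair $(\tau,\sigma)$, so there are at most $N^2$ of them; let $\mathcal T$ be this finite set of flows. Two facts confine every $\short^n(F)$ inside $\mathcal T$: by the first part each $\short^n(F)$ has height exactly $\h(F)$, and resolution of unary flows creates no function symbol absent from $F$, so the new products $\dec G\inc G$ use only the original $S$ symbols.

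Finally I would use that the chain $F=\short^0(F)\subseteq\short^1(F)\subseteq\cdots$ is non-decreasing (again because $\short$ is inflationary) and contained in $\mathcal T$. Once two consecutive terms coincide the chain is constant thereafter, since $\short$ is a function; and at every earlier step the inclusion is strict, so the cardinality increases by at least one. Because cardinalities are bounded by $\card(\mathcal T)\leq N^2$, strict growth can occur at most $N^2$ times, so the chain is stationary from step $N^2$ onwards. Its stationary value is exactly $\sat(F)=\sum_{n\in\NN}\short^n(F)$, whence $\short^n(F)=\sat(F)$ for every $n\geq N^2=(S^{\h(F)}+\cdots+1)^2$. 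The only delicate point is the containment in $\mathcal T$: it is precisely where the height control of \autoref{lem_grow} is needed, since without it the products could in principle escape the bounded set and the chain would not be forced to stabilize.
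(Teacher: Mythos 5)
Your proof is correct and follows essentially the same route as the paper's: the height invariance via \autoref{lem_grow} plus induction, and then the observation that the $\short$-iterates form a non-decreasing chain inside the finite set of stack operations of height at most $\h(F)$ over the $S$ symbols of $F$, whose cardinality is $(S^{\h(F)}+\cdots+1)^2$. Your write-up merely makes explicit two points the paper leaves implicit (that products of unary flows introduce no new function symbols, and that the chain stays constant once two consecutive iterates coincide), which is a fair elaboration rather than a different argument.
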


\begin{proof}
By \autoref{lem_grow} we have
\[\h(\dec F\inc F)\leq\max\{\h(\dec F),\h(\inc F)\}=\h(F)\]
Therefore $\h\big(\short(F)\big)= \h(F)$, and we get the first property by induction.

For any $n$, the elements of $\short^n(F)$ are stack operations of height at most $\h(F)$ built
from the function symbols appearing in $F$, therefore $\short^n(F)$ is a subset of a set of cardinality
$k\eqdef(S^{\h(F)}+S^{\h(F)-1}+\cdots+1)^2$. As $G\subseteq\short(G)$ for all $G$,
the iteration of $\short(\cdot)$ on $F$ must be stable after at most $k$ steps.
\end{proof}

In the following, we let \FPtime be the class of functions computable by Turing
machine in polynomial time. Here we need to specify how the size of a wiring is measured.

\begin{definition}[size]\label{def_size}
The \emph{size} $\size F$ of a wiring $F$ is defined as the total number of function symbol occurrences in
it.
\end{definition}

By computing the fixpoint of $\short(\cdot)$ we have first a \FPtime procedure computing the saturation.

\begin{corollary}[computing the saturation]\label{cor_sat} %
Given any integer $h$, there is procedure $\algo{Satur}{h}(\cdot)\in\FPtime$ that, given an element
$F\in\stack$ such that $\h(F)\leq h$ as an input, outputs $\sat(F)$.
\end{corollary}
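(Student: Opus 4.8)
The plan is to compute $\sat(F)$ in the most direct way possible: starting from $G_0 \eqdef F$, repeatedly apply the shortcut operation, setting $G_{n+1} \eqdef \short(G_n) = G_n + \dec{G_n}\inc{G_n}$, and halt as soon as $G_{n+1} = G_n$, returning the stabilized wiring. Correctness is immediate from \autoref{def_sat} together with \autoref{prop_sat}: since $G \subseteq \short(G)$ for every $G$, the sequence $(G_n)_n$ is increasing for inclusion, its union is exactly $\sat(F)$, and \autoref{prop_sat} guarantees that it becomes stationary---so that the halting test eventually succeeds and returns $\sat(F)$---after at most $k \eqdef (S^{\h(F)} + S^{\h(F)-1} + \cdots + 1)^2$ steps, where $S$ is the number of distinct function symbols occurring in $F$.

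The heart of the argument is that, because $h$ is fixed, this whole computation stays of polynomial size. By \autoref{prop_sat} every iterate satisfies $\h(G_n) = \h(F) \leq h$, and no iterate introduces new function symbols, so each $G_n$ is a subset of the fixed set of stack operations of height at most $h$ built from the $S$ symbols of $F$. That set has cardinality $k \leq (h+1)^2 S^{2h}$, and since $S \leq \size F$ and $h$ is a constant, $k$ is bounded by a polynomial in $\size F$. In particular every intermediate wiring, as well as the output $\sat(F)$, contains at most $k$ flows, each of which is a stack operation $\op\tau\sigma$ with $\tau,\sigma$ of length at most $h$, hence of size $O(h)$, so all objects manipulated have size polynomial in $\size F$.

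It then remains to bound the cost of the procedure. Splitting $G_n$ into its decreasing and increasing parts is a single scan testing, for each flow, the height inequality of \autoref{not_incr}. Forming $\dec{G_n}\inc{G_n}$ requires computing at most $|\dec{G_n}|\cdot|\inc{G_n}| = O(k^2)$ pairwise products, each of which amounts to checking that one of two unary sequences of length at most $h$ is a prefix of the other, and thus costs $O(h)$; the subsequent union with $G_n$ and removal of duplicates is done in time polynomial in $k$. Hence a single application of $\short$ runs in time polynomial in $\size F$, and since at most $k$ applications are needed before the fixpoint test succeeds, the total running time is polynomial in $\size F$, placing $\algo{Satur}{h}$ in \FPtime.

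The one point to handle with care---and the reason the statement is phrased as a family of procedures indexed by $h$ rather than a single procedure taking $h$ as part of the input---is precisely the estimate $k \leq (h+1)^2 S^{2h}$: it is polynomial in $S$ only when $h$ is treated as a constant, since were $h$ part of the input the factor $S^{2h}$ would be exponential and the argument would collapse. The genuinely mathematical content that makes this bounded-height regime available, namely that $\short$ does not increase height, has already been established in \autoref{lem_grow} and \autoref{prop_sat}, so the present corollary is essentially complexity bookkeeping on top of those results.
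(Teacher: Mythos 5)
Your proof is correct and follows exactly the route the paper intends: the paper states this corollary without further argument as an immediate consequence of \autoref{prop_sat} (\enquote{by computing the fixpoint of $\short(\cdot)$}), and your proposal is precisely that fixpoint iteration, made explicit with the size and cost bookkeeping. Your closing observation about why $h$ must be a fixed parameter rather than part of the input also matches the paper's own discussion in \autoref{rem_flat}.
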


Moreover, we can obtain a further reduction of the nilpotency problem in $\stack$ related to saturation.

\begin{lemma}[rotation]\label{lem_rota}
Let $f$ and $g$ be stack operations. Then $fg$ is a cycle \iff $gf$ is a cycle.
\end{lemma}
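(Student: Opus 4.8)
The plan is to show that each direction of the equivalence follows from Remark~\ref{rem_cycl}, which states that a flow $f$ is a cycle if and only if $f^2 \neq 0$. Applied to the flows $fg$ and $gf$, this reduces the claim to showing that $(fg)^2 \neq 0$ if and only if $(gf)^2 \neq 0$. So the whole lemma becomes an equivalence about non-vanishing of the fourth-order products $fgfg$ and $gfgf$.

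**The key observation** is that non-vanishing of a product of stack operations is a question about unifiability of adjacent terms along the chain, and this unifiability is invariant under cyclic rotation of the sequence. Concretely, I would argue that $fgfg \neq 0$ precisely when the product can be composed all the way through, which is a condition on the sequence $f,g,f,g$; rotating to $g,f,g,f$ preserves exactly this composability because the relevant matching conditions between consecutive factors are the same cyclic pattern of constraints. The cleanest way to make this rigorous is to relate it directly to cyclicity of the product flow: since $f$ and $g$ are stack operations (elements of $\stack$), I can invoke Proposition~\ref{cor_cycl}, which says a stack operation that is a cycle satisfies $h^n \neq 0$ for all $n$. Thus $fg$ being a cycle means $(fg)^n \neq 0$ for all $n$, and in particular the infinite iterability is what transfers to $gf$.

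**The slickest route** is probably the following symmetry argument. Suppose $fg$ is a cycle. Then by Remark~\ref{rem_cycl}, $(fg)^2 = fgfg \neq 0$. Since the product of flows is associative wherever defined, and $fgfg \neq 0$ forces each intermediate product to be nonzero (a product of flows vanishes as soon as any adjacent pair fails to unify), in particular $gfg \neq 0$ and hence $gf \neq 0$ and $(gf) \cdot (gf) = g(fgf)g$; from $fgfg \neq 0$ one extracts that the "inner" product $gfgf \neq 0$ as well. Therefore $(gf)^2 \neq 0$, so by Remark~\ref{rem_cycl} again $gf$ is a cycle. The converse is obtained by exchanging the roles of $f$ and $g$, which is symmetric.

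**The step I expect to be the main obstacle** is justifying carefully that $fgfg \neq 0$ implies $gfgf \neq 0$ — this is the genuine content, the rotational invariance, and it must be argued at the level of unification rather than waved through by associativity alone, since associativity only guarantees that the bracketing does not matter, not that a cyclic shift of the factors preserves non-vanishing. The right tool is to unfold the definition of the product: writing $f = \op{\tau}{\sigma}$ and $g = \op{\rho}{\chi}$, the non-vanishing of a chain of stack operations amounts to a sequence of compatibility conditions between consecutive $\sigma$-$\rho$ pairs (one being a prefix/suffix of the other, as in the case analysis of Lemma~\ref{lem_grow}). These conditions for $f,g,f,g$ and for $g,f,g,f$ are literally the same set of constraints read in a rotated order, and since the product $fgfg$ is nonzero exactly when all these constraints are simultaneously satisfiable, the rotation $gfgf$ inherits satisfiability. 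Making this bookkeeping precise, using the prefix/suffix structure of stack operations, is where the real work lies; once it is done, the two applications of Remark~\ref{rem_cycl} close the argument immediately.
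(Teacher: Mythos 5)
There is a genuine gap, and it sits exactly where you predicted: the claim that $fgfg \neq 0$ yields $gfgf \neq 0$. The word \enquote{inner} is doing illegitimate work there: $gfgf$ is \emph{not} a consecutive sub-product of $fgfg$ (those are $fg$, $gf$, $fgf$, $gfg$ and $fgfg$ itself), so the principle \enquote{a consecutive factor of a nonzero product is nonzero} does not apply; incidentally your identity $(gf)(gf)=g(fgf)g$ is also wrong, as the right-hand side has five factors. Your fallback justification --- that non-vanishing of a chain of stack operations is invariant under cyclic rotation because \enquote{the same constraints are read in a rotated order} --- is false as a general principle: take $f = \symbf{a}(x)\flow\symbf{b}(x)$ and $g = \symbf{b}(x)\flow\symbf{c}(x)$; then $fg = \symbf{a}(x)\flow\symbf{c}(x) \neq 0$ while $gf = 0$, since $\symbf{c}(x)$ and $\symbf{a}(y)$ do not unify. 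Unification constraints are not a fixed set of independent pairwise conditions that rotation permutes: the MGU computed at one junction propagates and modifies the terms at the next. The remark following \autoref{cor_cycl} shows how delicate this is --- for general flows a cycle $f$ can satisfy $f^2 \neq 0$ yet $f^3 = 0$, so \enquote{all adjacent pairs compose} never suffices by itself for longer chains.

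The fix is the step you gestured at in your \enquote{key observation} but never actually used in the argument: \autoref{cor_cycl}. Since $fg$ is itself a stack operation, $fg$ being a cycle gives $(fg)^n \neq 0$ for \emph{all} $n$; take $n=3$ rather than $n=2$. Then $(fg)^3 = f\,(gf)(gf)\,g$, and now $(gf)^2$ \emph{is} a consecutive factor, so $(gf)^2 \neq 0$ (otherwise the whole product would vanish by absorption of $0$), whence $gf$ is a cycle by \autoref{rem_cycl}; the converse is symmetric in $f$ and $g$. This is precisely the paper's proof, and it is two lines. In short: your reduction via \autoref{rem_cycl} and the appeal to symmetry are fine, but at the second power the rotation genuinely cannot be extracted, and the unification bookkeeping you defer to is essentially a re-proof of \autoref{cor_cycl} itself; passing to the third power is what makes the extraction trivial, and it is available only because stack operations --- unlike general flows --- have indefinitely iterable cycles.
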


\begin{proof}
If $fg$ is a cycle, then $(fg)^n\neq 0$ for any $n$ by \autoref{cor_cycl}. In particular $(fg)^3\neq 0$
and as we have $(fg)^3=f(gf)(gf)g$ we get $(gf)^2\neq 0$, \ie $gf$ is a cycle.
\end{proof}

\begin{theorem}[cyclicity and saturation]\label{th_sat} %
An element $F$ of $\stack$ is cyclic (\autoref{def-cyclicity}) \iff either $\inc{\sat(F)}$ or $\dec{\sat(F)}$ is.
\end{theorem}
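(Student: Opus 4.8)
The plan is to reduce cyclicity, in both directions, to the existence of a sequence of stack operations whose product is a cycle, and to lean on two structural properties of $\sat(F)$. First, by \autoref{prop_sat} the saturation stabilizes, so $\sat(F)$ is a fixpoint of $\short(\cdot)$ and therefore $\dec{\sat(F)}\,\inc{\sat(F)}\subseteq\sat(F)$. Second, every flow of $\sat(F)$ is a product of flows of $F$, \ie belongs to some power $F^{j}$: this is immediate by induction on the number of iterations of $\short(\cdot)$, since $\short(\cdot)$ only ever adjoins products of flows already present. The converse implication is then easy: both $\inc{\sat(F)}$ and $\dec{\sat(F)}$ are subsets of $\sat(F)$, and $A\subseteq B$ gives $A^{k}\subseteq B^{k}$, so it suffices to note that if $(\sat(F))^{k}$ contains a cycle $c$, then $c$ is a product of $k$ flows of $\sat(F)$, each lying in some $F^{j_{i}}$, whence $c\in F^{\sum_{i} j_{i}}$ and $F$ is cyclic.

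For the direct implication, assume $F$ is cyclic, so that some $F^{k}$ contains a cycle; this yields a sequence of flows of $F\subseteq\sat(F)$ whose product is a cycle. Among all sequences $g_{1},\dots,g_{m}$ of flows of $\sat(F)$ with $\pr(g_{1},\dots,g_{m})$ a cycle, I would pick one of minimal length $m$ and claim it is \emph{monotone}, meaning that all its flows are increasing or all are decreasing (\autoref{not_incr}). This suffices: the product is then a cycle lying in $(\inc{\sat(F)})^{m}$ or in $(\dec{\sat(F)})^{m}$, so $\inc{\sat(F)}$ or $\dec{\sat(F)}$ is cyclic.

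To establish the claim I argue by contradiction. If the minimal sequence is not monotone it contains both a strictly increasing flow and a flow that is not strictly increasing; reading the flows cyclically and walking backwards from a strictly increasing $g_{j}$ to the first flow $g_{i}$ that is not strictly increasing, I obtain a circularly adjacent pair in which $g_{i}$ is decreasing (by definition, $\h(\tau(x))\leq\h(\sigma(x))$) and its successor is increasing. Since any cyclic rotation of a sequence with cyclic product again has cyclic product --- apply \autoref{lem_rota} repeatedly, comparing the rotation $QP$ to the original $PQ$ --- I may rotate this pair into consecutive position without losing cyclicity. The product of this consecutive decreasing-then-increasing pair lies in $\dec{\sat(F)}\,\inc{\sat(F)}\subseteq\sat(F)$ and is nonzero, being a factor of the nonzero cyclic product; replacing the pair by this single flow produces a sequence of flows of $\sat(F)$ of length $m-1$ with the same (hence cyclic) product, contradicting minimality.

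The delicate point is entirely in this forward direction. One must treat flows that are simultaneously increasing and decreasing \incise{those with $\h(\tau(x))=\h(\sigma(x))$} carefully, which is why the backward search stops at the first flow that is \emph{not strictly increasing}: this guarantees a genuine decreasing-increasing adjacency to feed the shortcut. One must also apply the rotation lemma to products of stack operations, which are again stack operations so that \autoref{lem_rota} applies. The stability of height (\autoref{lem_grow}), which is built into $\short(\cdot)$, is precisely what ensures the shortcut flow stays inside $\sat(F)$ rather than escaping to larger height.
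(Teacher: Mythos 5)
Your proof is correct and takes essentially the same route as the paper's: the easy direction via $\sat(F)\subseteq\sum_{n}F^{n}$, and the hard direction by monotonizing a cyclic sequence of flows of $\sat(F)$, merging adjacent decreasing--increasing pairs (licensed by $\sat(F)$ being a fixpoint of $\short$) and handling the wrap-around with the rotation lemma (\autoref{lem_rota}). The only difference is bookkeeping \incise{you use a minimal-length (extremal) argument where the paper runs an explicit two-rule rewriting procedure with a termination proof} and this is an inessential variation of the same argument.
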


\begin{proof}
The cyclicity of $\inc{\sat(F)}$ or $\dec{\sat(F)}$ obviously implies that of $F$ because $\short(F)\subseteq F+F^2$, hence $\sat(F)\subseteq\sum_{n\in\NN}F^n$.

Conversely, suppose $F$ is cyclic and let $\sequ s=\extlist{f_1}{f_n}\in F$ be such that the product $\pr(\sequ s)\in F^n$ is a cycle.

We are going to produce from $\sequ s$ a sequence of elements of $\inc{\sat(F)}$ or $\dec{\sat(F)}$ whose product is a cycle. For this we apply to the sequence the following rewriting procedure:

\begin{enumerate}
\item \label{rew_1}If there are $f_i$ and $f_{i+1}$ such that $f_i$
is decreasing and $f_{i+1}$ is increasing, then rewrite $\sequ s$
as $\extlist{f_1,\dots,\,f_if_{i+1}\,}{f_n}$.
\item \label{rew_2}If step \ref{rew_1} does not apply and $\sequ s=\sequ s_1\sequ s_2$
($\sequ s_1$ and $\sequ s_2$ both non-empty) with all elements of $\sequ s_1$ increasing
and all elements of $\sequ s_2$ decreasing, then rewrite $\sequ s$ as $\sequ s_2\sequ s_1$.
\end{enumerate}

This rewriting procedure preserves the following invariants:
\begin{itemize}
\item All elements of the sequence are in $\sat(F)$: step \ref{rew_2} does not affect the elements of the sequence (only their order) and step \ref{rew_1} replaces the flows $f_i\in\dec{\sat(F)}$ and $f_{i+1}\in\inc{\sat(F)}$ by $f_if_{i+1}\in\sat(F)$.
\item The product $\pr(\sequ s)$ of the sequence is a cycle: step \ref{rew_1} does not alter $\pr(\sequ s)$ and step \ref{rew_2} does not alter the fact that $\pr(\sequ s)$ is a cycle by \autoref{lem_rota}.
\end{itemize}

The rewriting terminates as step \ref{rew_1} strictly reduces the length of the sequence and step \ref{rew_2} can never be applied twice in a row (it can be applied only when step \ref{rew_1} is impossible and its application makes step \ref{rew_1} possible).
Let $\extlist{g_1}{g_n}$ be the resulting sequence, as it cannot be reduced, the $g_i$ must be either all increasing or all decreasing.

Therefore, by the invariants above $\extlist{g_1}{g_n}$ is either a sequence of elements of $\dec{\sat(F)}$ or $\inc{\sat(F)}$ such that the product $g_1\cdots g_n$ is a cycle.
\end{proof}

Finally, we need a way to decide cyclicity of elements of $\stack$ that are
either increasing or decreasing.

\begin{lemma}\label{lem-Incr_h}
Given any integer $h$, there is a procedure $\algo{Incr}{h}(\cdot)\in\Ptime$ that, given an element
$F\in\stack$ which is either increasing or decreasing and
satisfying $\h(F)\leq h$ as an input, accepts \iff $F$ is nilpotent.
\end{lemma}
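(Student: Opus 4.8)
The plan is to decide acyclicity rather than nilpotency directly, which is legitimate since a wiring in $\stack$ is nilpotent \iff it is acyclic (the characterization of nilpotency established above). Moreover, reversing every flow --- sending $\term t\flow\term u$ to $\term u\flow\term t$ --- exchanges increasing and decreasing stack operations while preserving cyclicity: it turns a product $\pr(\sequ s)$ into the reversed product, and a flow $\op\alpha\beta$ is a cycle \iff $\alpha$ and $\beta$ are comparable (one a prefix of the other), a condition symmetric in head and body. So it suffices to construct $\algo{Incr}{h}$ for \emph{increasing} inputs $F$ with $\h(F)\leq h$, the decreasing case following by this reversal.

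The key observation is that increasing stack operations behave as a prefix-rewriting (pushdown) system in which the stack height never decreases: reading $\op\tau\sigma$ as the rule rewriting $\sigma(v)$ into $\tau(v)$ consumes the top $\size\sigma\leq h$ symbols and produces $\tau$ with $\size\tau\geq\size\sigma$. By \autoref{cor_cycl} a cycle can be iterated indefinitely, hence $F$ is cyclic \iff it admits an infinite derivation, and along any such derivation the height is non-decreasing. I would then split into two regimes. If some infinite derivation has bounded height, it must eventually use only height-preserving flows (those $\op\tau\sigma$ with $\h\big(\tau(x)\big)=\h\big(\sigma(x)\big)$), because the net height change of a product $\pr(\sequ s)=\op\alpha\beta$ equals $\size\alpha-\size\beta$, which is the sum of the non-negative net changes of the flows used; a vanishing total forces each to vanish. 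The products of such balanced flows keep height $\leq h$, so they range over a set of cardinality at most $(S^{h}+\cdots+1)^2$, polynomial in $\size F$ for fixed $h$. A cycle among them is found by iterating this balanced sub-wiring polynomially many times and testing each product $g$ for $g^2\neq 0$ (\autoref{rem_cycl}).

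If instead every infinite derivation has unbounded height, the growth must come from pumping: there is a nonempty product carrying a top configuration (the top $\leq h$ symbols, all that any flow inspects) to a strictly taller, comparable one. Such a pumping product is detected by a reachability analysis on the polynomially many \emph{surface configurations} of length $\leq 2h$: one builds the finite graph whose edges are the applications of flows of $F$ reading only within the recorded top symbols, and searches for a cycle realizing a net push. Both tests run in time polynomial in $\size F$ (for fixed $h$), and $F$ is cyclic \iff one of them succeeds; accepting exactly when neither does yields a procedure $\algo{Incr}{h}\in\Ptime$ that accepts \iff $F$ is nilpotent.

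The main obstacle is the completeness of this surface analysis: one must prove that the unbounded-stack dynamics of $F$ is faithfully captured by the bounded (height $\leq 2h$) surface data --- that a cycle, whenever it exists, is witnessed either by a balanced loop at height $\leq h$ or by a pumping step whose intermediate configurations stay within height $2h$. This is precisely the memoization idea in our setting: since no increasing flow can examine the portion of the stack below its top $h$ symbols, that portion may be abstracted away. Making this rigorous requires a minimality argument (keeping the proper prefixes of a cyclic sequence acyclic and bounding their heights via \autoref{lem_baillot}, together with the rotation \autoref{lem_rota} to normalize the witnessing sequence) in order to rule out the possibility that a taller intermediate stack is genuinely needed to close a cycle.
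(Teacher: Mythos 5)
Your overall architecture---reduce to cyclicity, handle the decreasing case by reversal, then split the cycles of an increasing wiring into height-preserving ones and strictly pumping ones---is coherent, and your first regime is in fact sound: since the net height changes of increasing stack operations are non-negative and additive under product, a height-preserving cycle can only be a product of height-preserving flows, and such products stay inside the finite set of operations of height at most $h$, so the closure computation you describe is a correct polynomial-time test. The problem is the second regime, and you say so yourself: the completeness of the surface analysis---that a pumping cycle of $F$, whenever one exists, is witnessed by a net-push cycle in the graph of surface configurations of height at most $2h$---is exactly the statement you do not prove. That statement is the entire mathematical content of the lemma in this case: both directions of your test (a pumping cycle yields a surface cycle, and a surface cycle lifts back to genuine non-nilpotency of $F$) rest on the fact that the evolution of the top $h$ symbols is well defined independently of the rest of the stack. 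Without it you have an algorithm but no correctness proof, and even the choice of the bound $2h$ is left unjustified.

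Moreover, the repair you sketch (a minimality argument using \autoref{lem_baillot} and \autoref{lem_rota}) does not lead where you need: \autoref{lem_baillot} bounds the height of products of \emph{acyclic} sequences by $\h(\sequ s)(\card(\sequ s)+1)$, a bound that grows with $\card(F)$, so the resulting configuration space has size roughly $S^{\h(F)(\card(F)+1)}$---exponential, as \autoref{ex_count} illustrates---and cannot support a polynomial-time reachability analysis. The missing idea is simpler and different: an increasing flow of height at most $h$ can never read a symbol lying at depth greater than $h$ below the top of the stack, so everything below depth $h$ is permanently frozen and may be abstracted by a constant. The paper implements exactly this, algebraically, with the truncation wiring $T=\trunc{\h(F),E}=\sum_{\tau}\tau(\dummy)\flow\tau(x)$ (sum over sequences $\tau$ of length $\h(F)$ of symbols of $F$): every element of $TF$ is balanced, $TF$ has size polynomial in $\size F$, and the single identity $(TF)^n=TF^n$ (valid precisely because $F$ is increasing) shows $F$ is nilpotent \iff $TF$ is; the known $\coNLogspace\subseteq\Ptime$ procedure for nilpotency of balanced wirings then concludes, with no regime split, no configuration graph, and no pumping argument. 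I would advise either proving your surface-faithfulness claim as a standalone lemma (it is essentially equivalent to $(TF)^n=TF^n$) or adopting the truncation formulation outright.
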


\begin{proof}
Let $S$ be a set of function symbols and $h$ an integer. We define the \emph{truncation wiring} associated to $S$ and $h$
\[\trunc {h,S}\eqdef\sum_{\mathclap{\tau=\extlist{\term f_1}{\term f_{h}}\in S}} \tau(\dummy) \flow\tau(x)\]
and set for the rest of the proof $T\eqdef \trunc {\h(F),E}$ where $E$ is the set of function symbols
occurring in $F$.

As it contains only flows of the form $\tau(\dummy)\flow \sigma(x)$, \ie with only one variable,
$T F$ is balanced and can be computed in polynomial time since $T$ is of polynomial size in $\size F$.

If $F$ is increasing, an easy computation shows that we have $(T F)^n=T F^n$.
From this, we deduce that $F$ is nilpotent \iff $T F$ is.
If $F=\sum_i \sigma_i(x) \flow \tau_i(x)$ is decreasing, we can
consider $F^\dagger\eqdef\sum_i \tau_i(x) \flow \sigma_i(x)$ which is increasing and nilpotent \iff $F$ is.

Then, as we know~\cite[p.~54]{Aubert2014b}, \cite[Theorem~IV.12]{Bagnol2014}
 the nilpotency problem for balanced wirings to be in \coNLogspace$\subseteq$\Ptime, we are done.
\end{proof}

\begin{theorem}[nilpotency is in \Ptime]\label{th-completness}
Given any integer $h$, there is a procedure $\algo{Nilp}{h}(\cdot)\in\Ptime$ that, given a $F\in\stack$
such that $\h(F)\leq h$ as an input, accepts \iff $F$ is nilpotent.
\end{theorem}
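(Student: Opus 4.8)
The plan is to assemble the procedure $\algo{Nilp}{h}$ by chaining the three ingredients developed above, relying throughout on the characterization that a wiring in $\stack$ is nilpotent if and only if it is acyclic. First I would invoke \autoref{cor_sat}: on input $F\in\stack$ with $\h(F)\leq h$, compute $\sat(F)$ in polynomial time using $\algo{Satur}{h}$. By \autoref{prop_sat} the saturation leaves the height unchanged, so $\h(\sat(F))\leq h$; hence the two subsets $\inc{\sat(F)}$ and $\dec{\sat(F)}$ are again wirings in $\stack$ of height at most $h$, one increasing and one decreasing, obtained from $\sat(F)$ by a simple scan.

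Next I would reduce cyclicity of $F$ to a question about these two subsets. By \autoref{th_sat}, $F$ is cyclic exactly when $\inc{\sat(F)}$ or $\dec{\sat(F)}$ is cyclic. Since both $\inc{\sat(F)}$ and $\dec{\sat(F)}$ lie in $\stack$, the equivalence between nilpotency and acyclicity applies to each of them individually: each is cyclic precisely when it fails to be nilpotent. I can therefore decide their cyclicity with \autoref{lem-Incr_h}, which provides the polynomial-time procedure $\algo{Incr}{h}$ that accepts an increasing-or-decreasing wiring of height at most $h$ exactly when it is nilpotent.

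Putting the pieces together, $\algo{Nilp}{h}(F)$ runs $\algo{Satur}{h}$ to obtain $\sat(F)$, extracts $\inc{\sat(F)}$ and $\dec{\sat(F)}$, and accepts if and only if both $\algo{Incr}{h}(\inc{\sat(F)})$ and $\algo{Incr}{h}(\dec{\sat(F)})$ accept. Correctness follows by unwinding the equivalences: both subprocedures accept iff neither subset is cyclic iff (by \autoref{th_sat}) $F$ is acyclic iff $\isnilp{F}$. For the complexity bound, the output of $\algo{Satur}{h}$ is of polynomial size in $\size F$ (by \autoref{prop_sat} it is contained in a fixed-cardinality set of flows of height at most $h$ over the symbols of $F$), the extraction of the increasing and decreasing parts is linear, and composing an \FPtime transformation with two \Ptime tests stays in \Ptime.

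The step to watch most carefully is the size bookkeeping that keeps the whole composition polynomial: I must ensure that $\sat(F)$ has size polynomial in $\size F$ so that feeding it into $\algo{Incr}{h}$ is legitimate, which hinges on $h$ being a genuinely fixed parameter so that the cardinality bound of \autoref{prop_sat} is polynomial rather than exponential (recall \autoref{ex_count}, where iterating $F$ itself could take exponentially many steps). The remaining logical chaining is routine, provided one is careful to apply the nilpotency–acyclicity equivalence separately to $F$, to $\inc{\sat(F)}$, and to $\dec{\sat(F)}$.
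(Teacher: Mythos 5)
Your proposal is correct and takes essentially the same route as the paper, whose entire proof is to set \(\algo{Nilp}{h}(\cdot) = \algo{Incr}{h}(\algo{Satur}{h}(\cdot))\) and invoke compositionality of \Ptime and \FPtime algorithms. The only difference is that you spell out what the paper's one-line composition leaves implicit: since \(\algo{Incr}{h}\) is only specified on wirings that are increasing or decreasing, one must extract \(\inc{\sat(F)}\) and \(\dec{\sat(F)}\), run \(\algo{Incr}{h}\) on each, and accept iff both accept, with correctness chained through \autoref{th_sat} and the nilpotency--acyclicity equivalence exactly as you describe.
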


\begin{proof}
Simply take \(\algo{Nilp}{h}(\cdot) = \algo{Incr}{h} (\algo{Satur}{h}(\cdot))\).
By compositionality of \(\Ptime\) and \(\FPtime\) algorithms, this procedure is in \(\Ptime\).
\end{proof}

\begin{remark}\label{rem_flat}
All the results we gave in this section are parametrized by a height limit $h$, but this is only to ease
the
presentation. Indeed, it is possible to transform any element of $\stack$ with an unspecified height
into another element of comparable size but of height at most $2$, preserving its nilpotency.

More precisely: consider a flow $l=\sigma(x)\flow\tau(x)$, with $\sigma=\extlist{\symbf f_1}{\symbf f_m}$ and
$\tau=\extlist{\symbf g_1}{\symbf g_n}$.
Let us introduce new function symbols $\flatpop{i} (\cdot)$ and $\flatpush{j} (\cdot)$ for $1\leq i \leq m$ and
$1\leq j \leq n$. We can rewrite $l$ as the sum
\[
\begin{multlined}
\shoveleft{\symbf g_1(x)\flow \flatpush 1(x)\,+\,\flatpush 1(\symbf g_2(x))\flow \flatpush 2(x)\,+\,\cdots}\\[.3em]
\shoveleft[.2\columnwidth]{\cdots+\flatpush n(x)\flow\flatpop 1(x)+\cdots}\\[.3em]
\shoveleft{\cdots\,+\,\flatpop {m-1}(x)\flow \flatpop m(\symbf f_{m-1}(x))\,+\,\flatpop m(x) \flow \symbf f_m(x)}
\end{multlined}
\]
with the idea that instead of popping and pushing several symbols at the same time, we do this step by step:
we push/pop only one symbol and then leave a marker (either $\flatpop{i} (\cdot)$ or $\flatpush{j} (\cdot)$)
for the next operation to be performed.
Moreover, we see that this flow of size (\autoref{def_size})
 $\size l=m+n$ is transformed in a wiring containing three flows of size $2$
(the central and two extremal ones) and $(m-1)+(n-1)$ flows of size $3$; hence the sum has size $3\size l$.

When dealing with a wiring $W$, we can do the same by considering one family of
$\flatpop{i} (\cdot)$ and $\flatpush{j} (\cdot)$ symbols for each flow $l$ of $W$.
It is not hard to see that the resulting wiring $W_{\textnormal{flat}}$ has the same behavior as the original one in terms of nilpotency.
It is also clear that the height of $W_{\textnormal{flat}}$ is indeed $2$.
Finally, if we started with a wiring $W$ of size $N$, which is the sum of the sizes of the flows in it,
we get in the end that $\size {W_{\textnormal{flat}}}=3\size W$ using the one-flow case above.

This suggests by the way that the bound in \autoref{prop_sat} is probably too rough,
but a way to sharpen it still needs to be found.
\end{remark}

\subsection{\Ptime Soundness}
We will now use the saturation technique to prove that the language recognized by an observation with stack
belongs to the class \Ptime.
The important point in the proof is that, given an observation $O$ and a representation \(\rep{W}_{p}\) of a word \(W\), one can produce in polynomial time
 an element of $\stack$ whose nilpotency is equivalent to the nilpotency of $O\rep{W}_{p}$.
One can then decide the nilpotency of this element thanks to the procedure described
in the previous section.

\begin{proposition}\label{prop_red}
Let $O \in \stackobs$ be an observation with stack.
There is a procedure $\algo{Red}O(\cdot)\in\FPtime$ that, given a word $W$ as an input, outputs a wiring
$F\in\stack$ with $\h(F)\leq\h(O)$ such that $F$ is nilpotent \iff $O\wordrep W p$ is for any
choice of $\positions p$.
\end{proposition}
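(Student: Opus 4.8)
The plan is to compile away everything in the product $O\wordrep W p$ except the stack, recording the finite part of the interaction in fresh unary symbols. First I would fix the input $W=\symbf c_1\cdots\symbf c_n$ and one concrete tuple of position constants $\positions p$: by normativity (\autoref{normativityppty}) the truth value of $\isnilp{O\wordrep W p}$ does not depend on this choice, so it suffices to build a single $F$ whose nilpotency matches that of one representative. Since $O$ is a fixed observation with stack (\autoref{def_stobs}) and $\wordrep W p$ consists of only $O(n)$ flows computable from $W$ in polynomial time (\autoref{word-rep}), the product $O\wordrep W p$ is itself computable in polynomial time, and each of its flows has, after unification with the word representation, exactly the shape analysed in the proof of \autoref{th_pcomplete}: a stack operation $\op\tau\sigma$ of height at most $\h(O)$ tensored with \emph{finite-control data} --- the read symbol in $\Sigma\cup\{\dummy\}$, the move direction, the state, and the positions of the $k+1$ pointers, which now all live in the finite set $\{\symbf p_0,\dots,\symbf p_n\}$.

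Next I would make this finite control fully explicit. Define a \emph{surface configuration} $C$ to be a tuple (symbol, direction, state, positions of all $k+1$ pointers); there are only polynomially many of them \incise{of the order of $(n+1)^{k+1}$, with $n=\size W$ and $k$ the fixed number of auxiliary heads}. Instantiating the auxiliary-pointer variables of each flow of $O\wordrep W p$ to all concrete values then yields a polynomial-size set of concrete transitions, each reading \enquote{from $C$, perform $\op\tau\sigma$, move to $C'$}. Introducing one fresh unary symbol $\symbf e_C$ per surface configuration, I would set
\[
F\eqdef\sum \symbf e_{C'}\big(\tau(x)\big)\flow \symbf e_{C}\big(\sigma(x)\big),
\]
the sum ranging over these transitions. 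By construction $F\in\stack$; and since in every flow of $O$ the stack component occurs as a proper subterm below the leading $\p$, the single extra symbol $\symbf e_C$ is absorbed and $\h(F)\leq\h(O)$. Reading off configurations and emitting these flows is clearly polynomial, so $\algo{Red}O(\cdot)\in\FPtime$.

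It then remains to establish the equivalence $\isnilp F$ \iff $\isnilp{O\wordrep W p}$. The point is that the product in $F$ of a flow $\symbf e_{C''}(\rho(x))\flow\symbf e_{C'}(\upsilon(x))$ by a flow $\symbf e_{C'}(\tau(x))\flow\symbf e_{C}(\sigma(x))$ is nonzero exactly when the shared marker $\symbf e_{C'}$ matches \incise{forcing agreement of the intermediate surface configuration} \emph{and} the stack operations $\op\tau\sigma$, $\op\upsilon\rho$ compose --- which is precisely the condition for the corresponding flows of $O\wordrep W p$ to compose once all positions have been instantiated. Hence a nonzero product of length $m$ in $F$ maps, by erasing the markers, onto a nonzero product of length $m$ in $O\wordrep W p$; conversely any nonzero length-$m$ product of $O\wordrep W p$ threads a concrete sequence of surface configurations \incise{choosing arbitrary values for any pointer left unconstrained} which lifts to a nonzero length-$m$ product of $F$. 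Thus one semiring admits arbitrarily long nonzero products \iff the other does, i.e. one is non-nilpotent (\autoref{def_nilp}) \iff the other is.

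The main obstacle is exactly this last correspondence: one must check carefully that bundling the alphabet and balanced parts of $O$ together with $\wordrep W p$ into the single marker $\symbf e_C$ faithfully reproduces the unification constraints of $O\wordrep W p$ --- in particular that pre-instantiating the auxiliary-pointer variables neither creates nor destroys any composable product, and that the circular \incise{Church-style} shape of $\wordrep W p$ together with the $\dummy$-endmarkers behaves correctly at the boundary positions $\symbf p_0$ and $\symbf p_n$. Pinning down the height bookkeeping (confirming that the marker costs exactly one symbol and is swallowed by the $\p$-nesting already present in $\h(O)$) is the other delicate point, but once the explicit form of the flows of $O\wordrep W p$ from \autoref{th_pcomplete} is in hand, both verifications become routine.
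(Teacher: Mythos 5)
Your overall strategy is the right one, and it is essentially the paper's: ground the non-stack part of the flows of $O\wordrep W p$ into finitely many closed pieces, then re-encode each such piece as a fresh unary function symbol wrapped around the stack operation, so that the result lands in $\stack$ with the height absorbed by the $\p$-nesting. Your markers $\symbf e_C$ play exactly the role of the symbols $\func t_i(\cdot),\func u_i(\cdot)$ in the paper's construction.

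However, there is a genuine gap in how you justify the grounding step. You claim that every flow of $O\wordrep W p$ has \enquote{exactly the shape analysed in the proof of \autoref{th_pcomplete}} \demiincise{a state, a direction, and the positions of $k+1$ pointers ranging over $\{\symbf p_0,\dots,\symbf p_n\}$}. This is false for a general $O\in\stackobs$: that shape is a property of the \emph{specific} observations built in the completeness proof, and the paper stresses immediately after \autoref{th_pcomplete} that a general observation with stack is not constrained in this way. By \autoref{def_stobs}, the non-stack component of $O$ is an \emph{arbitrary} balanced wiring (not using position constants): it has no states, no fixed family of \enquote{pointer} slots, and its variables may occur at arbitrary (bounded) heights, so your enumeration of surface configurations and your claim that all remaining variables range over the $\symbf p_i$ do not apply. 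This matters because \autoref{prop_red} must hold for arbitrary observations \demiincise{if it only covered automaton-shaped ones, the soundness theorem would be circular}. The correct general argument, which is what the paper's proof sketch does, views $O\wordrep W p$ as an element of $\balanced\p\stack$ and invokes the grounding technique for \emph{balanced} wirings from earlier work (\cite{Aubert2014b}, \cite[Proposition IV.21]{Bagnol2014}): the balanced component can be replaced by closed terms of bounded height, of which there are only polynomially many, \emph{without altering nilpotency in either direction}. That preservation statement is precisely the non-trivial content you defer to a \enquote{routine} verification; it is not routine, and it cannot be read off the automaton picture, since for a general $O$ there is no automaton. Once that lemma is in place, your construction of $F$ and the height bookkeeping go through exactly as you describe.
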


\begin{proof}[sketch~\protect{\cite[proposition IV.21]{Bagnol2014}}]
The idea is that the product $O\wordrep W p$ can be seen as an element of $\balanced\p\stack$.
Then, its balanced part can be replaced in polynomial time by closed terms without altering the nilpotency
in a way similar to what is done to treat the nilpotency of elements of $\balanced$~\cite{Aubert2014b}.

We are left with a flow $\sum_i \term t_i\p \sigma_i(x) \flow \term u_i\p \tau_i(x)$ such that \( \term t_i \flow \term u_i\) is balanced and \(\sigma_i(x) \flow \tau_i (x) \) is a stack operation,
and we can associate to each closed $\term t_i,\term u_i$,
 unary function symbols \(\func t_i(\cdot)\), \(\func u_i(\cdot)\),
and rewrite our flow as
$\sum_i \functerm t_i(\sigma_i(x)) \flow \functerm u_i(\tau_i(x))\in\stack$.
\end{proof}

\begin{theorem}[soundness]\label{th_psound}
If $O\in\stackobs$ is an observation with stack, then $\lang O\in\Ptime$.
\end{theorem}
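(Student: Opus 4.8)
The plan is to obtain $\lang O\in\Ptime$ by composing the two tools just established: the reduction $\algo{Red}O(\cdot)$ of \autoref{prop_red} and the height-parametrized nilpotency test $\algo{Nilp}{h}(\cdot)$ of \autoref{th-completness}. First I would fix the observation $O$ once and for all and set $h\eqdef\h(O)$, noting that this is a constant that does not depend on the input word. This is what will let me invoke the parametrized procedure $\algo{Nilp}{h}$ uniformly, with the \emph{same} $h$, on every instance arising from the input.

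Next, on input $W\in\Sigma^\ast$, the decision procedure for $\lang O$ would run in two stages. It first computes $F\eqdef\algo{Red}O(W)\in\stack$; by \autoref{prop_red} this is done in polynomial time and produces a wiring with $\h(F)\leq\h(O)=h$. It then feeds $F$ to $\algo{Nilp}{h}(\cdot)$, which by \autoref{th-completness} decides in polynomial time whether $F$ is nilpotent. For correctness I would chain the relevant equivalences: by \autoref{prop_red}, $F$ is nilpotent if and only if $O\wordrep W p$ is, for any choice of position constants $\positions p$; by normativity (\autoref{normativityppty}) the nilpotency of $O\wordrep W p$ is independent of $p$; and by \autoref{lang-obs} this means $W\in\lang O$ exactly when $O\wordrep W p$ is nilpotent, hence exactly when $F$ is nilpotent. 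Thus the composite procedure accepts $W$ precisely when $W\in\lang O$.

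Finally, since the overall algorithm is the composition of an $\FPtime$ function $\algo{Red}O$ with a $\Ptime$ decision procedure $\algo{Nilp}{h}$, it is itself in $\Ptime$, which gives $\lang O\in\Ptime$. I do not expect any real obstacle at this stage: all the difficulty has already been absorbed into \autoref{prop_red} and \autoref{th-completness}. The one point deserving emphasis is the height bound $\h(F)\leq\h(O)$ supplied by \autoref{prop_red}: it guarantees that the single fixed parameter $h=\h(O)$ dominates every wiring $F$ produced from an input, which is exactly what makes the height-parametrized nilpotency test applicable uniformly across all inputs $W$, and hence what turns the pointwise polynomial bounds into a genuine polynomial-time decision of the language.
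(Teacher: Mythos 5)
Your proposal is correct and is essentially identical to the paper's own proof: the paper also decides $\lang O$ by the composite $\algo{Nilp}{\h(O)}(\algo{Red}O(\cdot))$, invoking the compositionality of \Ptime and \FPtime. Your additional remarks (fixing $h=\h(O)$ as an input-independent parameter, and chaining the equivalences through \autoref{prop_red} and normativity) merely make explicit what the paper leaves implicit.
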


\begin{proof}
We have, using the compositionality of \Ptime and \FPtime again,
that $\algo{Nilp}{\h(O)}(\algo{Red}O(\cdot))$ is a decision procedure in \(\Ptime\) for $\lang O$.
\end{proof}

\sepsection
\section{Unary Logic Programming}
In previous sections, we showed how
the $\stack$ semiring captures polynomial time computation.
As we already mentioned, the elements of this semiring correspond to a specific class of logic programs.

We cover in here the consequences in terms of logic programming of the results and techniques introduced so far.
The basic definitions and a list of previously known results \incise{that highlight the novelty of our result} regarding logic programming can be found in an extensive survey~\cite{Dantsin2001}.

As an illustration, we show in \autoref{sec_cvp} how the classical boolean circuit value problem (CVP)~\cite{Ladner1975} can be encoded as a unary logic program, thus providing an alternative proof of its inclusion in \(\Ptime\).

\subsection{Unary Queries}\label{unaryqueries}

\begin{definition}[data, goal, query]
A \emph{unary query} is a triple $\queryf Q=(D,P,G)$, where:
\begin{itemize}
\item $D$ is a set of closed unary terms (a \emph{unary data}),
\item $P$ is a an element of $\stack$ (a \emph{unary program}),
\item $G$ is a closed unary term (a \emph{unary goal}).
\end{itemize}

We say that the query $\queryf Q$ \emph{succeeds} if $G\dashv$ can be derived combining $d\dashv$ for $d\in D$ and the elements of $P$ by the resolution rule exposed in \autoref{semiring}, otherwise we say the query \emph{fails}.

The \emph{size} $\size{\queryf Q}$ of the query is defined as the total number of occurrences of symbols in it.
\end{definition}

To apply the saturation technique directly, we need to represent all the elements of the unary query (data, program, goal) as elements of $\stack$. This requires an encoding.

\begin{definition}[encoding unary queries]
We suppose that for any constant symbol $\symbf c$, we have a unary function symbol $\func c(\cdot)$.
We also need two unary functions, \(\func{START}(\cdot)\) and \(\func{ACCEPT}(\cdot)\).

To any unary data $D$ we associate an element of $\stack$:
\[\denc D\eqdef \set{\tau(\func c(x))\flow \symbf{START}(x)}{\tau(\symbf c)\in D}\]
and to any unary goal $G=\tau(\symbf c)$ we associate
\[\genc G\eqdef \symbf{ACCEPT}(x)\flow\tau(\func c (x))\]
\end{definition}

\begin{remark}
The program part $P$ of the query needs not to be encoded as it is already an element of $\stack$.
\end{remark}

Once a query is encoded, we can tell if it is successful or not using the language of the resolution semiring.

\begin{lemma}[success]
A unary query $\queryf Q=(D,P,G)$ succeeds if and only if
\[\symbf{ACCEPT}(x)\flow\symbf{START}(x)\in \genc GP^n\denc D \quad\text{for some }n\]
\end{lemma}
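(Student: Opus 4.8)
The plan is to unfold the definitions of the encodings $\denc D$, $\genc G$, and the resolution mechanism, and then to establish a tight correspondence between resolution derivations of the query and nonzero products of flows in the resolution semiring. The guiding intuition is that a successful query corresponds to a chain of resolution steps starting from some datum $\tau(\symbf c)\in D$, applying program clauses of $P$ to rewrite the term, and eventually reaching the goal $G$; in the encoded setting this chain becomes exactly a composition of flows that cancels the $\symbf{START}$ marker on one end and produces the $\symbf{ACCEPT}$ marker on the other.

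First I would analyze the product $\genc G P^n \denc D$ from the inside out. A flow in $\denc D$ has the form $\tau(\func c(x))\flow\symbf{START}(x)$, obtained from a datum $\tau(\symbf c)\in D$; composing it with a stack operation $\op\rho\chi\in P$ (which acts by matching $\rho$ against the head $\tau\func c$ of the left-hand term) realizes precisely one resolution step $d\dashv$ against a program clause. Iterating this $n$ times through $P^n$ builds up the term obtained after $n$ resolution steps, all while carrying $\symbf{START}(x)$ unchanged as the right-hand side, because no flow in $P$ or $\genc G$ mentions the symbol $\symbf{START}$. Finally composing with $\genc G=\symbf{ACCEPT}(x)\flow\tau'(\func{c'}(x))$ succeeds exactly when the accumulated term has head $\tau'\func{c'}$, \ie when the goal $G=\tau'(\symbf{c'})$ has been derived. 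The net composite is then $\symbf{ACCEPT}(x)\flow\symbf{START}(x)$, since all the intermediate structure is consumed by unification.

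The two directions of the equivalence then proceed by induction on the length of a resolution derivation. For the forward direction, given a successful derivation of $G\dashv$ from the data and program, I would read off the sequence of clauses used and exhibit the corresponding sequence of flows whose product is nonzero and equals $\symbf{ACCEPT}(x)\flow\symbf{START}(x)$, recording $n$ as the number of program steps. For the converse, given that $\symbf{ACCEPT}(x)\flow\symbf{START}(x)\in\genc G P^n\denc D$ for some $n$, I would decompose the witnessing product into one factor from $\denc D$, $n$ factors from $P$, and one from $\genc G$, and translate each successful unification back into a valid application of the resolution rule, thereby reconstructing a derivation of the goal. The key technical point ensuring this back-and-forth is clean is that the encoding is faithful: the fresh markers $\symbf{START}$ and $\symbf{ACCEPT}$ occur in exactly one flow each and nowhere in $P$, so they cannot be matched prematurely, which forces any nonzero product of the stated form to pass through exactly one $\denc D$ flow and one $\genc G$ flow.

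The main obstacle I expect is bookkeeping the unification discipline correctly, since resolution as defined in the excerpt \emph{matches} rather than unifies (variables are treated as bound, with the disjointness condition on $\Var(U)\cap\Var(V)$), whereas the product of flows performs genuine unification up to renaming. I would need to check that the restriction of the resolution rule to flows — where $H\dashv B$ has all variables of $H$ already in $B$ — makes these two notions coincide for the unary terms at hand, so that a resolution step on the original query and a nonzero flow composition on its encoding are in exact correspondence. Verifying that the unary structure guarantees MGUs behave as simple term-matching, and that no spurious unifications arise from the auxiliary function symbols $\func c(\cdot)$, is the delicate part; once that faithfulness is secured, the inductive equivalence follows routinely.
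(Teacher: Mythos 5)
Your proposal is correct and is essentially the argument this lemma rests on: the paper states it without proof, treating it as a direct unfolding of the definitions, and your step-by-step correspondence between flow products and resolution steps \demiincise{with the freshness of $\symbf{START}$, $\symbf{ACCEPT}$ and of the auxiliary symbols $\func c(\cdot)$ guaranteeing that no spurious unifications occur and that the data/goal flows can only be used at the two ends of the chain}\, is exactly that unfolding. One small slip: in the product $\op\rho\chi\,(\tau(\func c(x))\flow\symbf{START}(x))$ it is the body $\chi$ of the program flow, not its head $\rho$, that is unified against $\tau\func c$, but this does not affect the structure or validity of your argument.
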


Then, we can show that the saturation technique applies to the problem of deciding whether a unary query accepts.
The proof uses the saturation technique (\autoref{saturation-section}) to rewrite a sequence of flows, adding to them \enquote{pre-computed} rewriting rules.

\begin{lemma}[saturation of unary queries]
A unary query $\queryf Q=(D,P,G)$ succeeds if and only if
\[\symbf{ACCEPT}(x)\flow\symbf{START}(x)\in\sat\big(\denc D+P+\genc G\big)\]
\end{lemma}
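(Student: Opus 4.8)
The plan is to read the claim as the conjunction of the preceding \emph{success} lemma with the equivalence ``$\queryf Q$ succeeds iff $e\in\sat(F)$'', where I abbreviate $F\eqdef\denc D+P+\genc G$ and $e\eqdef\symbf{ACCEPT}(x)\flow\symbf{START}(x)$. Two remarks will be used throughout. First, since $\symbf{ACCEPT}$ and $\symbf{START}$ are fresh symbols, $\symbf{ACCEPT}$ heads the left member of $\genc G$ and of no other flow of $F$, and occurs exactly once in $e$; symmetrically for $\symbf{START}$ and the flows of $\denc D$. Second, comparing heights (\autoref{not_incr}), $\genc G$ is decreasing and every flow of $\denc D$ is increasing, because their $\symbf{ACCEPT}$- (\resp $\symbf{START}$-) side has height $1$ while the opposite side has height at least $1$.

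For the direction $e\in\sat(F)\Rightarrow\queryf Q$ succeeds, I would first recall from the proof of \autoref{th_sat} that $\short(F)\subseteq F+F^2$, hence $\sat(F)\subseteq\sum_{n}F^n$. Thus $e\in\sat(F)$ yields a factorisation $e=g_1\cdots g_m$ with each $g_i\in F$. Matching the head symbols forces $g_1=\genc G$ and $g_m\in\denc D$; and because $e$ contains a single occurrence of $\symbf{ACCEPT}$ (\resp $\symbf{START}$), no other factor may be $\genc G$ (\resp lie in $\denc D$). The remaining factors therefore belong to $P$, so $e\in\genc G\,P^{m-2}\,\denc D$ and $\queryf Q$ succeeds by the success lemma.

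The converse, $\queryf Q$ succeeds $\Rightarrow e\in\sat(F)$, is the heart of the matter and is where memoization enters. The success lemma provides an $n$ and a sequence $\sequ s=(\genc G,p_1,\dots,p_n,d)$ of flows of $F$ (with $p_i\in P$ and $d\in\denc D$) whose product is $e$. I would then run on $\sequ s$ the rewriting of step~\ref{rew_1} in the proof of \autoref{th_sat}: while some adjacent pair $f_i,f_{i+1}$ has $f_i$ decreasing and $f_{i+1}$ increasing, replace it by the single flow $f_if_{i+1}$. Two invariants survive each step: the product of the sequence stays equal to $e$ (associativity of composition), and every entry stays in $\sat(F)$ --- initially because $F\subseteq\sat(F)$, and under a replacement because $f_i\in\dec{\sat(F)}$, $f_{i+1}\in\inc{\sat(F)}$ and $\sat(F)$ satisfies $\dec{\sat(F)}\,\inc{\sat(F)}\subseteq\sat(F)$, itself a consequence of the fixpoint equality $\short(\sat(F))=\sat(F)$ (\autoref{prop_sat}).

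Each step strictly shortens the sequence, so the rewriting halts; the delicate point --- the step I expect to be the main obstacle --- is that it can only halt at length $1$. Here the rotation move (step~\ref{rew_2} of \autoref{th_sat}) is \emph{not} available, since $e$ is not a cycle; instead I would exploit the fixed endpoints. Throughout the rewriting the leftmost entry keeps $\symbf{ACCEPT}$ as the head of its left member, hence is decreasing, and the rightmost keeps $\symbf{START}$ as the head of its right member, hence is increasing. Suppose the process halted at $(h_1,\dots,h_m)$ with $m\geq2$ and no decreasing/increasing adjacency. As $h_1$ is decreasing, the pair $(h_1,h_2)$ forces $h_2$ to be not increasing, \ie strictly decreasing; iterating along the sequence, every $h_j$ is strictly decreasing, contradicting that $h_m$ is increasing. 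Hence $m=1$, the surviving flow equals $e$ by the product invariant, and it lies in $\sat(F)$, which is exactly what is required.
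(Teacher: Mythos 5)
Your skeleton is the right one, and indeed the only hint the paper gives (it states this lemma without proof, saying only that it rewrites a sequence of flows by the saturation technique): take a witness sequence from the success lemma, contract decreasing/increasing adjacent pairs as in step \ref{rew_1} of \autoref{th_sat}, and keep \enquote{product unchanged} and \enquote{entries in $\sat(F)$} as invariants --- your justification of the latter via $\short(\sat(F))=\sat(F)$ (\autoref{prop_sat}) is fine, as is $\sat(F)\subseteq\sum_n F^n$. The genuine gap is that both of your \enquote{hence} steps are non sequiturs, and they fail for the same reason: a stack operation may have a \emph{bare variable} on one side, and for such flows neither head-symbol forcing nor any monotonicity conclusion holds. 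These flows are not pathological: the paper lists $\var x\flow\symbf g(\var x)$ as a unary flow, and the circuit encoding of \autoref{sec_cvp} relies on flows like $x\flow\funv a(x)$. Concretely, take $D=\{\symbf c\}$, $G=\symbf c$, $P=\{\func c(x)\flow x,\;x\flow \func c(x)\}$, so that $F$ consists of $\genc G=\symbf{ACCEPT}(x)\flow\func c(x)$, $p_1=\func c(x)\flow x$ (strictly increasing), $p_2=x\flow\func c(x)$ (strictly decreasing) and $d=\func c(x)\flow\symbf{START}(x)$; the query succeeds, since $\genc G\,d=e$. First, your forward direction: $p_2p_1=\unit$, so $e=p_2\,p_1\,\genc G\,d\in F^4$ is a factorisation whose first factor lies in $P$ and in which $\genc G$ occurs third --- head matching forces nothing, because symbols cancel inside products, so counting occurrences of $\symbf{ACCEPT}$ in $e$ constrains the factors not at all. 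Second, your backward direction: the success lemma may hand you the witness $(\genc G,p_1,p_2,d)$, whose product is also $e$; the only step-\ref{rew_1} moves ever available are $\genc G\,p_1=\symbf{ACCEPT}(x)\flow x$ and $p_2\,d=x\flow\symbf{START}(x)$, so \emph{every} run of your rewriting halts at the length-$2$ sequence $\big(\symbf{ACCEPT}(x)\flow x,\;x\flow\symbf{START}(x)\big)$. Its first entry has $\symbf{ACCEPT}$ at the head of its left member yet is \emph{strictly increasing}: keeping a head symbol says nothing about relative heights, so your invariant \enquote{leftmost entry\ldots hence decreasing} is false, and the halting-at-length-$1$ argument collapses on an (increasing, decreasing) pair --- exactly the configuration that step \ref{rew_1} cannot touch and that, in \autoref{th_sat}, only the rotation step could unblock.

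Note that the lemma itself survives in this example ($e=\genc G\,d\in\dec F\inc F\subseteq\short(F)\subseteq\sat(F)$), so what your counterexample breaks is the proof, not the statement. A repair cannot take an arbitrary witness and an arbitrary reduction order: it must deal with cancelling pairs such as $x\flow\func c(x)$ and $\func c(x)\flow x$, either by choosing the witness or the contraction strategy so that stuck configurations (a block of strictly increasing entries followed by a block of strictly decreasing ones) never arise, or by analysing such stuck sequences with product $e$ directly and showing they force a shorter witness; symmetrically, the forward direction needs an actual induction extracting, from an arbitrary factorisation of $e$ over $F$, one of the form $\genc G\,P^n\,\denc D$ (in the example above, $e\in\genc G\,P^0\,\denc D$ exists, but it is not the factorisation you started from). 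Neither repair is cosmetic; both are precisely where the bare-variable flows must be confronted.
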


\begin{theorem}[\Ptime-completeness]
The \problem{UQuery} problem (given a unary query, is it successful?) is \Ptime-complete.
\end{theorem}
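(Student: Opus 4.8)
The plan is to prove the two halves of \Ptime-completeness independently. For membership, I would build directly on the saturation characterization just established: a unary query $\queryf Q=(D,P,G)$ succeeds exactly when the single flow $\symbf{ACCEPT}(x)\flow\symbf{START}(x)$ lies in $\sat\big(\denc D+P+\genc G\big)$, and $\denc D+P+\genc G$ is an element of $\stack$. So deciding success amounts to computing one saturation and performing one membership test. \autoref{cor_sat} already supplies an \FPtime procedure $\algo{Satur}{h}$ computing $\sat(\cdot)$ for wirings of height at most a fixed $h$; the one thing preventing a direct appeal is that, in combined complexity, the height of the query is part of the input. I would lift this restriction by first flattening $\denc D+P+\genc G$ to height at most $2$ following \autoref{rem_flat}, then running $\algo{Satur}{2}$ and testing membership — all in polynomial time. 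The delicate point is to check that flattening preserves query success and not merely nilpotency; this holds for the same reason, since a full traversal of each inserted chain of fresh markers reproduces exactly the original products, and the distinguished flow $\symbf{ACCEPT}(x)\flow\symbf{START}(x)$, whose endpoints are not among the markers, can only arise from such full traversals.

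For \Ptime-hardness I would give a logarithmic-space reduction from an arbitrary $L\in\Ptime$, recycling the automata encoding of \autoref{th_pcomplete}. By \autoref{th_autom-P} there is a \mnfas{k} $M$ with $\lang M=L$, and I would adopt the equivalent \enquote{reach a distinguished accepting state} acceptance mentioned after \autoref{ndfa_def}, turning acceptance into a reachability statement so that it matches the notion of query success. On an input $W$ of length $n$, the finite control of $M$ — a state together with the positions of the $k+1$ heads — takes only polynomially many values; I would encode each such value as a distinct unary function symbol $\symbf q_s(\cdot)$ and represent a whole configuration by the closed unary term $\symbf q_s\big(\tau(\bstack)\big)$, where $\tau(\bstack)$ is the current stack. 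Each transition of $M$ becomes a unary flow $\symbf q_{s'}(\cdots)\flow\symbf q_s(\cdots)$ carrying out the corresponding \push or \pop on the stack, exactly as in the \enquote{handling the stack} part of the proof of \autoref{th_pcomplete}; since $W$ is fixed, the symbols scanned by the heads are determined by $s$ and are simply built into the flow. This yields a unary program $P_M\in\stack$ of polynomial size, to which I attach the unary data $D_W=\{\symbf q_{\mathrm{init}}(\bstack)\}$ and the unary goal $G=\symbf q_{\mathrm{acc}}(\bstack)$ (assuming, without loss of generality, that $M$ empties its stack before accepting).

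By the success characterization, the query $(D_W,P_M,G)$ succeeds precisely when $\symbf{ACCEPT}(x)\flow\symbf{START}(x)\in\genc G\,P_M^{\,n}\,\denc{D_W}$ for some $n$, that is, precisely when there is a derivation leading from the initial configuration $\symbf q_{\mathrm{init}}(\bstack)$ through the transitions of $P_M$ to the accepting configuration, \ie exactly when $M$ accepts $W$ and hence when $W\in L$. Since the polynomially many control symbols and their local transitions can be enumerated in logarithmic space, this is a valid reduction, establishing hardness. The main obstacle lies here, in the faithfulness of the unary simulation: one must verify that baking the head positions into the finite control — rather than manipulating genuine position constants as $\wordrep W p$ does — reproduces every move of $M$, that the \push/\pop flows compose under the resolution product exactly as the stack evolves, and that no spurious unification occurs between terms headed by distinct control symbols. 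Granting this, together with the membership argument above, yields the claimed \Ptime-completeness.
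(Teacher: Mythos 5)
Your proposal is correct. The membership half is exactly the paper's argument: the saturation-of-unary-queries lemma combined with \autoref{cor_sat}, with the height dependence discharged by the flattening of \autoref{rem_flat}; the paper is just as terse as you are about why flattening preserves success (and not merely nilpotency), so your extra care there is a refinement rather than a divergence. The hardness half, however, takes a genuinely different route. The paper recycles its own machinery: it starts from the encoding of a \sansparamboldmnfas as an observation with stack (\autoref{th_pcomplete}) and composes it with the reduction $\algo{Red}O(\cdot)$ of \autoref{prop_red}, which collapses the balanced part of $O_{\automate{M}}\wordrep Wp$ into fresh unary symbols, a \enquote{variation} of this composite turning acceptance into derivability of a distinguished flow. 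You instead build the unary program directly from the automaton, folding the head positions into polynomially many control symbols $\symbf q_s(\cdot)$ and keeping only the stack as term structure. The end products are structurally almost identical \incise{in the paper's composite, the symbols $\func t_i(\cdot),\func u_i(\cdot)$ manufactured from the balanced part play precisely the role of your $\symbf q_s(\cdot)$} but your route is self-contained (no observations, no normativity, no nilpotency-to-reachability translation) and makes it transparent that the reduction is \Logspace-computable, a point worth making explicit since \Ptime-hardness is vacuous under polynomial-time reductions and \autoref{prop_red} is only advertised as an \FPtime procedure; what the paper's route buys in exchange is economy (no new simulation to verify) and the conceptual emphasis that queries, like observations, compute by interaction. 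Two small repairs to your construction, neither a real gap: elements of $\stack$ may not contain constants, so inside the program the bottom-of-stack marker must be a unary function symbol with the variable beneath it (exactly as in the stack handling of \autoref{th_pcomplete}), the constant $\bstack$ appearing only in the data and the goal, where the query encoding absorbs it; and since $M$ may accept with many different head positions, you should normalize it (or add collapsing flows) so that a single accepting surface configuration, hence a single goal term, suffices.
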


\begin{proof}
The lemma above, combined with \autoref{cor_sat}, ensures that the problem lies indeed in the class
\Ptime, \modulo the considerations on the height of \autoref{rem_flat}.

The hardness part follows from a variation on the encoding presented in \autoref{sec_completeness}
and the reduction derived from \autoref{prop_red}.
\end{proof}

\begin{remark}
We presented the result in a restricted form, to stay in line with the previous sections. However, it should be clear to the reader that
this construction would not be impacted if we allowed
\begin{itemize}
\item non-closed goals and data;
\item that in $\term{t} \flow \term{u}$ the variables of $\term{t}$ does not appear in $\term{u}$;
\item constants in the program part of the query.
\end{itemize}

A harder question is whether everything scales up to logic programs %
of the form $H\dashv \extlist{B_1}{B_n}$, with more than one formula on the right of $\dashv$.
Indeed we would no longer have obvious notions of \emph{increasing} or \emph{decreasing} (\autoref{not_incr})
clause anymore, and these are crucial to the saturation technique.
It is already known~\cite[pp.~386--387]{Dantsin2001} that in the case of \emph{propositional}
(\ie with no variables) logic programming, allowing more than one $B_i$ makes the combined complexity
(see \autoref{combined-comp} below)
switch from \Logspace to \Ptime: one can expect by analogy a higher complexity
than \Ptime in our unary case, but nothing has been proven yet.

\end{remark}

\begin{remark}\label{combined-comp}
In terms of complexity of logic programs, we are considering the \emph{combined complexity}~\cite[p.~380]{Dantsin2001}: every part of the query $\queryf Q=(D,P,G)$ is variable.
If for instance we fixed $P$ and $G$ (thus considering \emph{data complexity}), we would have a problem that is still in \Ptime, but it is unclear to us if it would be complete.
Indeed, the encoding of \autoref{sec_completeness} relies on a representation of inputs as plain programs,
and on the fact that the evaluation process is a matter of interaction between
programs rather than mere data processing.

\end{remark}

\subsection{Circuit Value Problem}\label{sec_cvp}

To illustrate our point in the introduction about rephrasing a problem with unary symbols to tell
whether it lies in \Ptime, we present an encoding of the classical \Ptime-complete
\emph{circuit value problem} (CVP)~\cite{Dantsin2001} as a unary query.

An instance of CVP is a boolean circuit composed of \andg, \org, \notg, \zerog and \oneg gates
and is \emph{accepted} if the circuit computes the value \oneg at its output gate.

More formally, we can see an instance of CVP as $(G,o)$ with $G$ an acyclic directed hypergraph%
\footnote{A \emph{directed hypergraph} is given as a set of vertices $V$ and a set of edges
$E\subseteq \powerset (V)\times\powerset(V)$. We say that $(S,T)\in E$ is an edge \emph{from} $S$ \emph{to}
$T$.

We consider labeled edges and write %
{\scriptsize$\gate{\extlist{x_1}{x_n}}{\mathtt k}{\extlist{y_1}{y_m}}$} an edge labeled by $\mathtt k$ from
$\extset{x_1}{x_n}$ to $\extset{y_1}{y_m}$.}
with a distinguished output vertex $o$ built with edges among
\[\ande abc \qquad \ore abc \qquad \note ab \qquad \zeroe {a} \qquad \onee {a}\]
such that any vertex is the target of exactly one edge.

First, we associate to each vertex $v$ of the graph a pair $\funv v(\cdot),\funvn v(\cdot)$,
of unary function symbols. Then to each edge $e$ we associate a flow $\enc e$ as follows:
\begin{align*}
\enc{\ande abc} \,\eqdef\, & \;\funv a(\funv b(x))\flow \funv c(x)\\
 & +\; \funvn a(x)\flow \funvn c(x)\;+\;\funvn b(x)\flow \funvn c(x)\\[4pt]
\enc{\ore abc} \,\eqdef\, & \; \funv a(x)\flow \funv c(x)\;+\;\funv b(x)\flow \funv c(x)\\
 & +\;\funvn a(\funvn b(x))\flow \funvn c(x)\\[4pt]
\enc{\note ab} \,\eqdef\, & \;\funvn a(x)\flow \funv b(x)\;+\;\funv a(x)\flow \funvn b(x)\\[4pt]
\enc{\zeroe {a}} \,\eqdef\, & \;x\flow \funvn a(x)\\[4pt]
\enc{\onee {a}} \,\eqdef\, & \;x\flow \funv a(x)
\end{align*}

The intuition behind this encoding is that we are handling a stack of needed values, $\funv v(\cdot)$
(\resp $\funvn v(\cdot)$) meaning \enquote{we need the value \oneg (\resp \zerog) at $v$}. The flows
associated to gates are then meant to handle this stack, popping and pushing needed values.

Then, to a circuit $(G,o)$ we associate the unary query
\[\big(\:\funv o(\dummy)\:,\ \sum_{\mathclap{e \text{ vertex of }G}} \, [e]\:,\:\dummy\:\big)\]

This query succeeds \iff the circuit computes the value \oneg at the gate $o$: the data
$\funv o(\dummy)$ initiates a stack with the intuitive meaning \enquote{we need the value \oneg at $o$},
the encodings of edges propagate the needed values to the point where they can be \enquote{popped} if the correct
{\footnotesize$\zeroe x$} or {\footnotesize$\onee x$} is available.
The query succeeds if we
can derive the goal $\dummy$ \incise{\ie the empty stack} with the intuitive
meaning \enquote{all the needed values have been provided}.

Note the parallel nature of this way of solving the problem: when we compute the saturation of (the encoding
of) the query, we unify the terms that match at any point of the circuit without having to worry in which
order we perform the operations.

For instance, the two elements $\funv a(x) \flow \funv o(\star)$ and $\funv b(x) \flow \funv o(\star)$ of
$\enc{\ore ab o(\dummy)}$ would be unified with \(o(\dummy) \flow \dummy\),
providing two flows \( \funv a(x) \flow \funv \dummy\) and \(\funv b(x)\flow \dummy\).
Those flows would be, at the next execution step, tested for unification against all (provided we respect
the increasing/decreasing discipline at work in \autoref{def_sat})
the other flows and so on, without having to wait to know whether $a$ or $b$ will hold the value $\oneg$.
A partial evaluation happens at any point of the graph, independently of the input:
\(\enc{\ande abc}\) and \(\enc{\note c d}\) will give after one step of evaluation the flows
\(\funv a (\funv b (x)) \flow \funvn d (x)\), \(\funvn a (x) \flow \funv d (x)\)
and \(\funvn b (x) \flow \funvn d(x)\).
The execution does not have to sequentially wait for the propagation of the needed values.

Finally, let us say a word about the stabilization time of $\sat(\cdot)$ (\autoref{def_sat}) in this case. Given a circuit with $S$ gates, we are dealing
with flows of height at most $2$, written with at most $S$ different symbols. In view of \autoref{prop_sat}
we have that the iterations of $\short(\cdot)$ will stabilize in at most $(S^2+S+1)^2$ steps.
A bound that is rough, due the absence of optimization and fine-grained analysis of the procedure.

\sepsection
\section{Perspectives}
This article extends modularly on our previous approaches %
\cite{Aubert2014ctemp, Aubert2015temp, Aubert2014, Aubert2014b}
to obtain a characterization of \Ptime, by adding a sort of \enquote{stack plugin} to observations.
This enhancement was guided by the intuition of a stack added to an automaton, allowing to move
from \Logspace to \Ptime and providing a decisive proof technique: memoization.

We saw that to a \emph{qualitative} constraint on the way memory is handled by automata corresponds a \emph{syntactical} restriction on flows.
These flows are evaluated in a setting inspired by the representation of inputs
in the interactive approach to the Curry-Howard correspondence \incise{\GofI},
which makes the complexity parametric in the program \emph{and} the input.
However, despite the evaluation being highly parallel and different from the step-by-step evaluation
performed by automata,
a precise simulation of pushdown automata by unary logic program is given, leading to complexity results.

We were able to adapt the mechanism of pre-computation of transitions, known as memoization,
in a setting where logic programs are represented as algebraic objects. %
This technique \incise{that we called the saturation technique} computes shortcuts in a logic program in order to decide its nilpotency faster.

This approach to complexity was earlier based on von Neumann algebras~\cite{Girard2012, Aubert2014ctemp,Aubert2015temp}
and now explore unification theory~\cite{Bagnol2014,Aubert2014,Aubert2014b}:
it is emerging as a meeting point for computer science, logic and mathematics. %
This raises multiple questions and perspectives.

A number of interrogations come from the relations of this work to proof theory.
First, we could consider the Church encoding of other data types \incise{trees for instance} and
define \enquote{orthogonally} set of programs interacting with them, wondering what is their computational nature.
In the distance, one may hope for a connection between our approach and ongoing work on higher order
trees and model checking;
all alike, one could study the interaction between observations and one-way integers \incise{briefly discussed in earlier work \cite{Aubert2014b}} or non-deterministic data.
Second, a still unanswered question of interest is to give an account of observations in terms of a proof-system.

One could also investigate possible relations with other models of computation, such as the interaction abstract machine~\cite{Danos1999} that already developed and used \incise{although with a different, much more logical, meaning} the notion of shortcut in the evaluation.

Finally, we also aim at representing functional computation, by considering a more general notion of observation that would allow for expressing the notion of output.

\sepsection
\bibliographystyle{IEEEtran}
\bibliography{standalone}

\vspace*{-1em}
\end{document}